\documentclass[runningheads,a4paper]{llncs}

\usepackage{amsmath}
\usepackage{amssymb}
\usepackage{theorem}
\usepackage{graphicx}
\usepackage{hyperref}
\usepackage[left=43mm,right=43mm,top=43mm,bottom=45mm]{geometry}

\newcommand{\Suff}{\textit{Suff}}
\newcommand{\Pref}{\textit{Pref}}
\newcommand{\Fact}{\textit{Fact}}

\newcommand{\AS}{\textit{ASF}}

\def\nats{\mathbb{N}}
\def\cd3#1{\textbf{\textsf{#1}}}
\def\sa#1{\cd3{#1}}

\renewcommand{\epsilon}{\varepsilon}
\renewcommand{\phi}{\varphi}

\sloppy

\begin{document}

\title{Words with the Maximum Number of Abelian Squares}

\author{
Gabriele Fici\inst{1} 
and Filippo Mignosi\inst{2}
}

\institute{
Dipartimento di Matematica e Informatica, Universit\`a di Palermo, Italy\\ \email{Gabriele.Fici@unipa.it}  
\and Dipartimento di Ingegneria e Scienze dell'Informazione e Matematica, Universit\`a dell'Aquila, Italy\\ \email{Filippo.Mignosi@di.univaq.it}
}

\maketitle

\begin{abstract}
An abelian square is the concatenation of two words that are anagrams of one another. A word of length $n$ can contain $\Theta(n^2)$ distinct factors that are abelian squares. We study infinite words such that the number of abelian square factors of length $n$ grows quadratically with $n$.
\end{abstract}

\keywords Abelian square, Thue-Morse word, Sturmian words, abelian-square rich word.

%%%%%%%%%%%%%%%%%%%%%%%%%%%%%%%%%%%%%%%%%%%%%%%%%%%%%%%%%%%%%%%%%%%%%%%%%%%%%%%%%%%%%%%%%%%%%%
\section{Introduction}
%%%%%%%%%%%%%%%%%%%%%%%%%%%%%%%%%%%%%%%%%%%%%%%%%%%%%%%%%%%%%%%%%%%%%%%%%%%%%%%%%%%%%%%%%%%%%%

A fundamental topic in Combinatorics on Words is the study of repetitions. A repetition in a word is a factor that is formed by the concatenation of two or more identical blocks. The simplest kind of repetition is a square, that is the concatenation of two copies of the same block, like $sciascia$.  A famous  conjecture of Fraenkel and Simpson \cite{FS98}  states that a word of length $n$ contains less than $n$  distinct square factors. Experiments strongly suggest that the conjecture is true, but a theoretical proof of the conjecture seems difficult. In \cite{FS98}, the authors proved a bound of $2n$. In \cite{I07}, Ilie improved this bound to $2n-\Theta(\log n)$, but the conjectured bound is still far away.

Among the different generalizations of the notion of repetition, a prominent one is that of an abelian repetition. An abelian repetition in a word is a factor that is formed by the concatenation of two or more blocks that have the same number of occurrences of each letter in the alphabet. Of course, the simplest kind of abelian repetition is an abelian square, that is therefore the concatenation of a word with an anagram of itself, like $viavai$. Abelian squares were considered in 1961 by Erd\"os \cite{Erdos1961221}, who conjectured that there exist infinite words avoiding abelian squares (this conjecture has later been proved to be true, and the smallest possible size of an alphabet for which it holds has been proved to be $4$ \cite{Ker92}).

We focus on the maximum number of abelian squares that a word can contain. Opposite to case of ordinary squares, a word of length $n$ can contain $\Theta(n^2)$ distinct abelian square factors (see \cite{Ry14}). Since the total number of factors in a word of length $n$ is quadratic in $n$, this means that there exist words in which a fixed proportion of all factors are abelian squares. So we turn our attention to infinite words, and we wonder whether there exist infinite words such that for every $n$ any factor of length $n$ contains, on average, a number of abelian squares that is quadratic in $n$. We call such an infinite word \emph{abelian-square rich}. Since a random binary word of length $n$ contains $\Theta(n\sqrt{n})$  distinct abelian square factors \cite{Ch14}, the existence of abelian-square rich words is not immediate. We also introduce \emph{uniformly abelian-square rich} words, that are infinite words such that for every $n$, every factor of length $n$ contains a quadratic number of abelian squares.

As a first result, we prove that the famous Thue-Morse word is uniformly abelian-square rich. Then we look at the class of Sturmian words, that are aperiodic infinite words with the lowest factor complexity. In this case, we prove that if a Sturmian word is $\beta$-power free for some $\beta\geq 2$ (that is, does not contain repetitions of order $\beta$ or higher), then it is uniformly abelian-square rich. 

%%%%%%%%%%%%%%%%%%%%%%%%%%%%%%%%%%%%%%%%%%%%%%%%%%%%%%%%%%%%%%%%%%%%%%%%%%%%%%%%%%%%%%%%%%%%%%
\section{Notation and Background}
%%%%%%%%%%%%%%%%%%%%%%%%%%%%%%%%%%%%%%%%%%%%%%%%%%%%%%%%%%%%%%%%%%%%%%%%%%%%%%%%%%%%%%%%%%%%%%

Let $\Sigma=\{a_1,a_2,\ldots,a_{\sigma}\}$ be an ordered $\sigma$-letter alphabet. Let $\Sigma^{*}$ stand for the free monoid generated by $\Sigma$, whose elements are called \emph{words} over $\Sigma$. The \emph{length} of a word $w$ is denoted by $|w|$. The \emph{empty word}, denoted by $\epsilon$, is the unique word of length zero and is the neutral element of $\Sigma^{*}$. We also define $\Sigma^{+}=\Sigma^{*}\setminus \{\epsilon\}$.
 
A \emph{prefix} (resp.~a \emph{suffix}) of a word $w$ is any word $u$ such that $w=uz$ (resp.~$w=zu$) for some word $z$. A \emph{factor} of $w$ is a prefix of a suffix (or, equivalently, a suffix of a prefix) of $w$.  . 
The set of prefixes, suffixes and factors of the word $w$ are denoted  by $\Pref(w)$, $\Suff(w)$ and $\Fact(w)$, respectively.
From the definitions, we have that $\epsilon$ is a prefix, a suffix and a factor of any word. 

For a word $w$ and a letter $a_i\in \Sigma$, we let $|w|_{a_i}$ denote the number of occurrences of $a_i$ in $w$. The \emph{Parikh vector} (sometimes called \emph{composition vector}) of a word $w$ over $\Sigma=\{a_1,a_2,\ldots,a_{\sigma}\}$ is the vector $P(w)=(|w|_{a_1},|w|_{a_2},\ldots,|w|_{a_{\sigma}})$. 
An \emph{abelian $k$-power} is a word of the form $v_1v_2\cdots v_k$ where all the $v_i$'s  have the same Parikh vector. An abelian $2$-power is called an \emph{abelian square}.

%Recurrence index, recurrent ratio.
An \emph{infinite word} $w$ over $\Sigma$ is an infinite sequence of letters from $\Sigma$, that is, a function $w:\nats \mapsto \Sigma$. Given an infinite word $w$, the \emph{recurrence index} $R_w(n)$ of $w$ is the least integer $m$ (if any exists) such that every factor of $w$ of length $m$ contains all factors of $w$ of length $n$. If the recurrence index is defined for every $n$, the infinite word $w$ is called \emph{uniformly recurrent} and the function $R_w(n)$ the \emph{recurrence function} of $w$. A uniformly recurrent word  $w$ is called  \emph{linearly recurrent} if the ratio $R_w(n)/n$ is bounded. Given a linearly recurrent word $w$, the real number $r_w=\limsup_{n\to\infty}R_w(n)/n$ is called the \emph{recurrence quotient} of $w$. 

The \emph{factor complexity function} of an infinite word $w$ is the integer function $p_w(n)$ defined by $p_w(n)=|\Fact(w)\cap \Sigma^n|$. An infinite word $w$ has \emph{linear complexity} if $p_w(n)=O(n).$

%Substitutions
A \emph{substitution} over the alphabet $\Sigma$ is a map $\tau:\Sigma \mapsto \Sigma^+$. Using the extension to words by concatenation, a substitution can be iterated. Note that for every substitution $\tau$ and every $n>0$, $\tau^n$ is again a substitution. Moreover, a substitution $\tau$ over $\Sigma$ can be naturally extended to a morphism from $\Sigma^*$ to $\Sigma^*$, since for every $u,v\in \Sigma^*$, one has $\tau(uv)=\tau(u)\tau(v)$, provided that one defines $\tau(\epsilon)=\epsilon$. A substitution $\tau$ is \emph{$k$-uniform} if there exists an integer $k\geq 1$ such that for all $a\in \Sigma$, $|\tau(a)|=k$. We say that a substitution is \emph{uniform} if it is $k$-uniform for some $k\geq 1$. A substitution $\tau$ is \emph{primitive} if there exists an integer $n\geq 1$ such that for every $a\in \Sigma$, $\tau^n(a)$ contains every letter of $\Sigma$ at least once. In this paper, we will only consider primitive substitutions such that $\tau(a_1)=a_1v$ for some non-empty word $v$. These substitutions always have a fixed point, which is the infinite word $w=\lim_{n\to\infty}\tau^n(a_1)$. Moreover, this fixed point is linearly recurrent (see for example \cite{DZ00}) and therefore has linear complexity.

%%%%%%%%%%%%%%%%%%%%%%%%%%%%%%%%%%%%%%%%%%%%%%%%%%%%%%%%%%%%%%%%%%%%%%%%%%%%%%%%%%%%%%%%%%%%%%
\section{Abelian-square Rich Words}
%%%%%%%%%%%%%%%%%%%%%%%%%%%%%%%%%%%%%%%%%%%%%%%%%%%%%%%%%%%%%%%%%%%%%%%%%%%%%%%%%%%%%%%%%%%%%%

Kociumaka et al.~\cite{Ry14} showed that a word of length $n$ can contain a number of distinct abelian square factors that is quadratic in $n$. We give here a proof of this fact for the sake of completeness.

\begin{proposition}
 A word of length $n$ can contain $\Theta(n^{2})$ distinct abelian square factors.
\end{proposition}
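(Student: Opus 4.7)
The plan is to prove the upper and lower bounds separately. The upper bound is trivial: any word of length $n$ has at most $\binom{n+1}{2}$ nonempty factors, hence at most $O(n^2)$ abelian squares among them. The substance lies in the lower bound, for which I will exhibit, for each $m$, an explicit word of length $\Theta(m)$ containing $\Omega(m^2)$ distinct abelian square factors.

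My candidate is the binary word $w_m := a^m b a^m b a^m$, of length $3m+2$. I will examine the two-parameter family of factors $u_{i,k} := a^i b a^m b a^k$ for $(i,k)\in \{0,1,\ldots,m\}^2$. Each $u_{i,k}$ does occur as a factor of $w_m$: one keeps both $b$'s together with the entire middle $a^m$-block, retaining $i$ letters from the first $a^m$-block and $k$ from the last one. Distinct pairs $(i,k)$ give distinct factors, since $i$ and $k$ can be read off $u_{i,k}$ as the lengths of its leading and trailing $a$-runs.

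The next step is to identify when $u_{i,k}$ is an abelian square. Its length is $L = i+m+k+2$, so one must first impose $i+k \equiv m \pmod{2}$; this discards only half of the pairs and costs nothing asymptotically. When $L$ is even, the two $b$'s of $u_{i,k}$ sit at positions $i+1$ and $i+m+2$, and the inequalities $i+1 \le L/2$ and $i+m+2 \ge L/2+1$ both follow immediately from $i,k \le m$. So the first $b$ lies in the first half of $u_{i,k}$ and the second $b$ in the second half, whence each half contains exactly one $b$; since the total number of $a$'s in $u_{i,k}$ is $i+m+k$, each half contains $(i+m+k)/2$ of them, the two Parikh vectors agree, and $u_{i,k}$ is an abelian square.

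Counting admissible pairs yields at least $\lceil(m+1)^2/2\rceil = \Omega(m^2) = \Omega(|w_m|^2)$ distinct abelian square factors in $w_m$, matching the upper bound. The only calculation requiring actual care is the pair of position inequalities for the two $b$'s, and both are one-line consequences of $i,k\le m$, so I do not anticipate any substantial obstacle.
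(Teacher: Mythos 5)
Your proposal is correct and follows essentially the same approach as the paper: the same witness word $a^m b a^m b a^m$ and the same family of factors $a^i b a^m b a^k$ with $i+k+m$ even. You merely supply the verification (the position check for the two $b$'s) that the paper leaves implicit.
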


\begin{proof}
 Consider the word $w_n=a^{n}ba^{n}ba^{n}$, of length $3n+2$. For every $0\leq i,j\leq n$ such that $i+j+n$ is even, the factor $a^{i}ba^{n}ba^{j}$ of $w$ is an abelian square. Since the number of possible choices for the pair $(i,j)$ is quadratic in $n$, we are done. %(Actually, $w_{n}$ contains $(n^{2}+3n+1+(-1)^{n})/2$ many distinct abelian squares.)
 \qed
\end{proof}

Motivated by the previous result, we wonder whether there exist infinite words such that all their factors contain a number of abelian squares that is quadratic in their length. But first, we relax this condition and consider words in which, for every sufficiently large $n$, a factor of length $n$ contains, on average, a number of distinct abelian square factors that is quadratic in $n$.

\begin{definition}
An infinite word $w$ is abelian-square rich if and only if there exists a positive constant $C$ such that for every $n$ sufficiently large one has 
\[\frac{1}{p_w(n)}\sum_{v\in \Fact(w)\cap\Sigma^n}{\{\mbox{\# abelian square factors of $v$}\}}\geq C n^2.
\]
\end{definition}

Notice that Christodoulakis et al. \cite{Ch14} proved that a binary word of length $n$ contains  $\Theta(n\sqrt{n})$  distinct abelian square factors on average, hence an infinite binary random word is almost surely not  abelian-square rich.

Given a finite or infinite word $w$, we let $\AS_w(n)$ denote the number of abelian square factors of $w$ of length $n$. Of course,  $\AS_w(n)=0$ if $n$ is odd, so this quantity is significant only for even values of $n$.

The following lemma is a consequence of the definition of linearly recurrent word.

\begin{lemma}\label{lem:lin}
 Let $w$ be a linearly recurrent word. If there exists a constant $C$ such that for every $n$ sufficiently large one has $\sum_{m\leq n}\AS_{w}(m)\geq Cn^2$, then $w$ is abelian-square rich. 
\end{lemma}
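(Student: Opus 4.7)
The plan is to exploit the defining property of linear recurrence to show that every long factor of $w$ must contain every short factor of $w$, and in particular every short abelian square factor. This will in fact give the stronger pointwise conclusion (each factor of length $n$ contains $\Omega(n^2)$ abelian squares), from which the averaged statement is immediate.

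First I would fix the constants. Since $w$ is linearly recurrent, there is $K>0$ with $R_w(m)\leq Km$ for all $m$ large enough. Set $N=\lfloor n/K\rfloor$. By definition of $R_w$, whenever $m\leq N$ we have $R_w(m)\leq Km\leq n$, so every factor of $w$ of length $n$ contains every factor of $w$ of length $m$. In particular, every factor $v\in\Fact(w)\cap\Sigma^n$ contains, as a factor, every abelian square factor of $w$ of length $\leq N$.

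Therefore, for each such $v$,
\[
\#\{\text{abelian square factors of }v\}\;\geq\;\sum_{m\leq N}\AS_{w}(m).
\]
By hypothesis, for $n$ sufficiently large (so that $N$ is large enough) the right-hand side is bounded below by $CN^{2}\geq C'n^{2}$ with $C'=C/(2K^{2})$, say, to absorb the floor. Summing this pointwise lower bound over the $p_w(n)$ factors of length $n$ and dividing by $p_w(n)$ yields
\[
\frac{1}{p_w(n)}\sum_{v\in\Fact(w)\cap\Sigma^{n}}\#\{\text{abelian square factors of }v\}\;\geq\;C'n^{2},
\]
which is exactly the definition of abelian-square richness.

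There is no real obstacle here: the argument is essentially a one-line application of linear recurrence, together with the trivial observation that an abelian square of $w$ of length $\leq N$ is a factor of $w$ of length $\leq N$, and hence appears inside every length-$n$ factor of $w$ once $N\leq n/K$. The only bookkeeping issue is making sure that ``$n$ sufficiently large'' in the hypothesis transfers correctly to ``$N=\lfloor n/K\rfloor$ sufficiently large'', which is harmless. It is worth remarking that the same argument actually proves the apparently stronger statement that $w$ is \emph{uniformly} abelian-square rich, since the lower bound holds for \emph{every} factor of length $n$, not merely on average.
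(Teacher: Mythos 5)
Your argument is correct and is exactly the reasoning the paper has in mind: the authors omit the proof, calling the lemma ``a consequence of the definition of linearly recurrent word,'' and the mechanism you use (a window of length $Kn$ contains all factors, hence all abelian squares, of length at most $n$) is precisely the one they spell out in the proof of the companion Lemma~\ref{lem:un}. Your closing remark that the bound is in fact pointwise, so the argument already yields uniform abelian-square richness, is also consistent with how the paper later combines Lemmas~\ref{lem:lin} and~\ref{lem:un}.
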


In an abelian-square rich word the average number of abelian squares in a factor is quadratic in the length of the factor. A stronger condition is that \emph{every} factor contains a quadratic number of abelian squares. We thus introduce uniformly abelian-square rich words.

\begin{definition}
An infinite word $w$ is uniformly abelian-square rich if and only if there exists a positive constant $C$ such that for every  $n$ sufficiently large one has 
\[\inf_{v\in \Fact(w)\cap\Sigma^n}{\{\mbox{\# abelian square factors of $v$}\}}\geq C n^2.
\]
\end{definition}

Clearly, if a word is uniformly abelian-square rich, then it is also abelian-square rich, but the converse is not always true. However, in the case of linearly recurrent words, the two definitions are equivalent, as shown in the next lemma.

\begin{lemma}\label{lem:un}
 If $w$ is abelian-square rich and linearly recurrent, then it is uniformly abelian-square rich.
\end{lemma}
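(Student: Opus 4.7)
The plan is to combine the averaging principle with the linear recurrence hypothesis. First I would unpack the definitions: linear recurrence yields a constant $K>0$ such that $R_w(n)\leq Kn$ for every sufficiently large $n$, which means that every factor of $w$ of length $\lceil Kn\rceil$ contains every factor of $w$ of length $n$ as a subfactor.

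Next, from the abelian-square richness of $w$, the average number of abelian square factors of a length-$n$ factor of $w$ is at least $Cn^{2}$ for some constant $C>0$ and $n$ large enough. By a one-line averaging argument, there exists at least one factor $v_n\in\Fact(w)\cap\Sigma^n$ containing at least $Cn^{2}$ distinct abelian square factors. Crucially, all these abelian squares are themselves factors of $v_n$, so they are factors of $w$ of length at most $n$.

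Now fix a large integer $N$ and set $n=\lfloor N/K\rfloor$. For any $u\in\Fact(w)\cap\Sigma^N$, linear recurrence guarantees that $v_n$ occurs in $u$, and therefore all $\geq Cn^{2}$ abelian squares contained in $v_n$ occur as factors of $u$. This yields at least $C\lfloor N/K\rfloor^{2}\geq C'N^{2}$ distinct abelian square factors of $u$ for a suitable constant $C'>0$ and all $N$ sufficiently large. Taking the infimum over $u$ gives uniform abelian-square richness.

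I do not foresee any real obstacle: the substance of the argument is the simple but useful observation that, in a linearly recurrent word, the existence of a single ``rich'' factor forces every sufficiently long factor to inherit that richness. The only slightly delicate point is handling lengths that are not divisible by $K$, which is routine since passing from $n$ to $\lfloor N/K\rfloor$ only absorbs constants into $C'$.
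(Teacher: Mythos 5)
Your proposal is correct and follows essentially the same argument as the paper: use averaging to extract one length-$n$ factor with at least $Cn^2$ abelian squares, then use linear recurrence to force that factor (and hence its abelian squares) into every factor of length $Kn$. Your additional care in passing from lengths of the form $Kn$ to arbitrary $N$ is a minor refinement the paper leaves implicit.
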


\begin{proof}
Since $w$ is linearly recurrent, there exists a positive integer $K$ such that every factor of $w$ of length $Kn$ contains all the factors of $w$ of length $n$. Let $v$ be a factor of $w$ of length $n$ containing the largest number of abelian squares among the factors of $w$ of length $n$. Hence the number of abelian squares in $v$ is at least the average number of abelian squares in a factor of $w$ of length $n$. Since $w$ is abelian square rich, the number of abelian squares in $v$ is greater than or equal to $Cn^2$, for a positive constant $C$ and $n$ sufficiently large. Since $v$ is contained in any factor of $w$ of length $Kn$, the number of abelian squares in any factor of $w$ of length $Kn$ is greater than or equal to $Cn^2$, whence the statement follows.
 \qed
\end{proof}

The rest of this section is devoted to prove that the Thue-Morse word and the Sturmian words that do not contain arbitrarily large repetitions are uniformly abelian-square rich.

%%%%%%%%%%%%%%%%%%%%%%%%%%%%%%%%%%%%%%%%%%%%%%%%%%%%%%%%%%%%%%%%%%%%%%%%%%%%%%%%%%%%%%%%%%%%%%
\subsection{The Thue-Morse Word}
%%%%%%%%%%%%%%%%%%%%%%%%%%%%%%%%%%%%%%%%%%%%%%%%%%%%%%%%%%%%%%%%%%%%%%%%%%%%%%%%%%%%%%%%%%%%%%

Let 
\[
t=011010011001011010010110\cdots
\]
be the Thue-Morse word, i.e., the fixed point of the uniform substitution $\mu:0\mapsto 01,1\mapsto 10$. 
For every $n\geq 4$, the factors of length $n$ of $t$ belong to two disjoint sets: those that start only at even positions in $t$, and those that start only at odd positions in $t$. %For example, the factors $0101$, $0110$, $1001$, $1010$ only appear at even positions, while the factors $0010$, $0011$, $0100$, $1011$, $1100$ and $1101$ only appear at odd positions (notice that all the factors in the first group are abelian squares, but none in the second group is). 
This is a consequence of the fact that $t$ is overlap-free,  hence $0101$ cannot be preceded by $1$ nor followed by $0$, and that $00$ and $11$ are not images of letters, so they cannot appear at even positions.

Let $p(n)$ be the factor complexity function of $t$. It is known \cite[Proposition 4.3]{Br89}, that for every $n\geq 1$ one has $p(2n)=p(n)+p(n+1)$ and $p(2n+1)=2p(n+1)$.

The next lemma (proved in \cite{CaFiScZa15}) shows that the Thue-Morse word has the property that for every length there are at least one third of the factors that begin and end with the same letter, and at least one third of the factors that begin and end with different letters. We define  $f_{aa}(n)$ (resp.~$f_{ab}(n)$) as the number of factors of $t$ of length $n$ that begin and end with the same letter (resp.~with different letters).

\begin{lemma}[\cite{CaFiScZa15}]\label{lem:third}
 For every $n\geq 2$, one has $f_{aa}(n)\geq  p(n)/3$ and $f_{ab}(n)\geq  p(n)/3$.
\end{lemma}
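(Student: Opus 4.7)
The plan is to use the substitutive structure of $t$ to derive recurrences linking $f_{aa}$ and $f_{ab}$ at length $2n$ or $2n+1$ to the same quantities at lengths $n$ and $n+1$, and then to prove by strong induction the stronger symmetric bound $p(n)/3 \leq f_{aa}(n),\, f_{ab}(n) \leq 2p(n)/3$. This symmetric form is the right invariant because complementation $0 \leftrightarrow 1$ is an automorphism of $\Fact(t)$, so showing the upper bound for one of $f_{aa}, f_{ab}$ gives the lower bound for the other.

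By the parity decomposition recalled just before the lemma, for $n \geq 4$ the factors of $t$ of length $n$ split into two disjoint classes $E(n)$ and $O(n)$ according to the parity of their (unique) starting positions in $t$. Since $t = \mu(t)$, every element of $E(2n)$ is of the form $\mu(v)$ for a unique factor $v$ of $t$ of length $n$, so $|E(2n)| = p(n)$; using $\mu(0) = 01$ and $\mu(1) = 10$, the first letter of $\mu(v)$ coincides with that of $v$ while its last letter is the complement of the last letter of $v$, so this bijection sends $aa$-factors to $ab$-factors. A similar analysis shows that every element of $O(2n)$ starting at position $2k+1$ has the form $\overline{t_k}\,\mu(t_{k+1}\cdots t_{k+n-1})\,t_{k+n}$, hence is in bijection with length-$(n+1)$ factors of $t$ and the bijection again swaps $aa$ with $ab$. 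For odd lengths $2n+1$, the two analogous bijections (even-start and odd-start) both biject with length-$(n+1)$ factors, and both preserve, rather than swap, the $aa/ab$ type. Collecting, for $n \geq 2$,
\begin{align*}
f_{aa}(2n) &= f_{ab}(n) + f_{ab}(n+1), & f_{ab}(2n) &= f_{aa}(n) + f_{aa}(n+1),\\
f_{aa}(2n+1) &= 2\,f_{aa}(n+1), & f_{ab}(2n+1) &= 2\,f_{ab}(n+1),
\end{align*}
whose sums recover Brlek's identities $p(2n) = p(n) + p(n+1)$ and $p(2n+1) = 2p(n+1)$, a useful consistency check.

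For the induction, the base cases $n \in \{2, 3\}$ are verified directly: $f_{aa}(2) = f_{ab}(2) = 2$ and $f_{aa}(3) = 2,\, f_{ab}(3) = 4$, both lying in $[p(n)/3, 2p(n)/3]$. The odd-length step is immediate, since $f_{aa}(2n+1)/p(2n+1) = f_{aa}(n+1)/p(n+1)$ and likewise for $f_{ab}$. For the even-length step, $f_{aa}(2n)/p(2n)$ equals the convex combination $\tfrac{p(n)}{p(n)+p(n+1)} \cdot \tfrac{f_{ab}(n)}{p(n)} + \tfrac{p(n+1)}{p(n)+p(n+1)} \cdot \tfrac{f_{ab}(n+1)}{p(n+1)}$, both entries of which lie in $[1/3, 2/3]$ by hypothesis, hence so does the combination; the same reasoning applies to $f_{ab}(2n)/p(2n)$. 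I expect the main technical obstacle to lie in describing the odd-start bijection for even length cleanly and verifying that the recorded first/last letter data really swaps $aa$ with $ab$; once the four recurrences are in place the induction is essentially free, because the two proportions are only averaged or copied, never distorted by a constant that could push them out of $[1/3, 2/3]$.
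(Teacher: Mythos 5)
Your proof is correct. Note that the paper itself does not prove this lemma --- it is imported verbatim from \cite{CaFiScZa15} --- so there is no in-paper argument to compare against; what you have written is a self-contained derivation of the cited result. The engine of your proof is sound: the parity decomposition of $\Fact(t)\cap\Sigma^m$ for $m\geq 4$, the explicit desubstitution of even-start and odd-start factors, and the observation that $\mu$ preserves first letters and complements last letters together yield the four recurrences $f_{aa}(2n)=f_{ab}(n)+f_{ab}(n+1)$, $f_{ab}(2n)=f_{aa}(n)+f_{aa}(n+1)$, $f_{aa}(2n+1)=2f_{aa}(n+1)$, $f_{ab}(2n+1)=2f_{ab}(n+1)$ for $n\geq 2$, and these are consistent with Brlek's complexity recurrences and with direct computation at small lengths (e.g.\ $f_{aa}(4)=6$, $f_{ab}(4)=4$, $p(4)=10$). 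The convexity argument then propagates the interval $[1/3,2/3]$ for the ratios without loss, and the base cases $n\in\{2,3\}$ are correctly computed. One small remark: your stated reason for adopting the two-sided bound --- that complementation $0\leftrightarrow 1$ is an automorphism of $\Fact(t)$ --- is not what links the upper bound on $f_{aa}$ to the lower bound on $f_{ab}$ (complementation maps $aa$-factors to $aa$-factors); the actual and entirely sufficient reason is the identity $f_{aa}(n)+f_{ab}(n)=p(n)$, which makes your symmetric invariant exactly equivalent to the lemma's two lower bounds. This is a misattributed motivation, not a gap, and the proof stands.
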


Since $p(n)\geq 3(n-1)$ for every $n$ \cite[Corollary 4.5]{DelVa89}, we get the following result.

\begin{corollary}\label{cor:TM}
  For every $n\geq 2$, one has $f_{aa}(n)\geq  n-1$ and $f_{ab}(n)\geq  n-1$.
\end{corollary}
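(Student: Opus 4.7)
The plan is to combine the two ingredients that the authors have explicitly lined up just before the statement: Lemma~\ref{lem:third}, which gives $f_{aa}(n) \geq p(n)/3$ and $f_{ab}(n) \geq p(n)/3$ for every $n \geq 2$, together with the known lower bound $p(n) \geq 3(n-1)$ from \cite[Corollary 4.5]{DelVa89}. Since the two desired inequalities have identical shape, one chain of estimates handles both at once.

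Concretely, I would chain the two bounds to get
\[
 f_{aa}(n) \;\geq\; \frac{p(n)}{3} \;\geq\; \frac{3(n-1)}{3} \;=\; n-1,
\]
and the identical computation with $f_{ab}$ in place of $f_{aa}$ closes the argument. There is no genuine obstacle here: the corollary is a direct arithmetic consequence of the two cited inputs, and the hypothesis $n \geq 2$ exactly matches the range in which both ingredients are available, so no separate verification of small cases is needed.
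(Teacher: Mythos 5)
Your proposal is correct and matches the paper's argument exactly: the corollary is obtained by combining Lemma~\ref{lem:third} with the bound $p(n)\geq 3(n-1)$ from \cite[Corollary 4.5]{DelVa89}, which is precisely the chain of inequalities you wrote down. Nothing further is needed.
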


\begin{proposition}
 The Thue-Morse word $t$ is uniformly abelian-square rich.
\end{proposition}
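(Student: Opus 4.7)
The plan is first to use Lemma \ref{lem:un} and Lemma \ref{lem:lin} to reduce the goal to a sum estimate. Since $t$ is the fixed point of the primitive substitution $\mu$, it is linearly recurrent, so Lemma \ref{lem:un} tells us that uniform abelian-square richness follows from (plain) abelian-square richness, and Lemma \ref{lem:lin} further reduces this to exhibiting a constant $C > 0$ such that $\sum_{m \leq n} \AS_t(m) \geq C n^2$ for all sufficiently large $n$.

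The central observation I plan to exploit is that $\mu(0) = 01$ and $\mu(1) = 10$ each have Parikh vector $(1,1)$, so for every word $u$ of even length $2k$ one can split
\[
\mu(u) = \mu(u_1 \cdots u_k) \cdot \mu(u_{k+1} \cdots u_{2k})
\]
into two halves of Parikh vector $(k,k)$, making $\mu(u)$ an abelian square of length $4k$. Applying this to every factor $w$ of $t$ of length $2k$ and using $t = \mu(t)$, I obtain a factor $\mu(w)$ of $t$ of length $4k$ that is an abelian square; since $\mu$ is injective on words, the map $w \mapsto \mu(w)$ is injective on length-$2k$ factors, so $\AS_t(4k) \geq p(2k) \geq 3(2k-1)$ by the cited bound $p(n) \geq 3(n-1)$.

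To cover lengths congruent to $2 \pmod 4$ and to put Corollary \ref{cor:TM} to use, I consider factors $cwd$ of $t$ of length $2k+2$ with $c \neq d$, of which there are at least $2k+1$. Writing $\mu(c) = b_1 b_2$ and $\mu(d) = d_1 d_2$, I examine the central block $b_2\, \mu(w)\, d_1$ of $\mu(cwd)$, which has length $4k+2$ and is a factor of $t$. The hypothesis $c \neq d$ forces $b_2 = \overline{c} = d = d_1$, so the two halves of length $2k+1$ each have Parikh vector $P(b_2) + (k,k)$, making the central block an abelian square. The map $cwd \mapsto b_2\, \mu(w)\, d_1$ is injective (one recovers $b_2$ and $d_1$ as the first and last letters, hence $c$ and $d$, and then $\mu(w)$ determines $w$), yielding $\AS_t(4k+2) \geq 2k+1$.

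Combining the two estimates, $\sum_{m \leq n} \AS_t(m) = \Omega(n^2)$, and the proposition follows from the two reduction lemmas. The only point requiring care is verifying that $\mu(w)$ and the central block $b_2 \mu(w) d_1$ are actually factors of $t$, which is immediate from $t = \mu(t)$; the main conceptual input is simply that $\mu$ preserves Parikh balance across halves of even-length inputs, and Corollary \ref{cor:TM} then supplies enough factors to feed into the construction.
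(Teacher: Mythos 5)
Your proof is correct and follows essentially the same route as the paper: the key observation that $\mu$ sends even-length words to abelian squares, the use of Corollary \ref{cor:TM} to supply linearly many suitable factors, a trim-the-outer-letters trick to reach lengths $\equiv 2 \pmod 4$, and the final reduction via Lemmas \ref{lem:lin} and \ref{lem:un}. The only (immaterial) difference is that the paper applies $\mu^2$ to factors that begin and end with the \emph{same} letter and deletes the first and last letters, whereas you apply $\mu$ once to factors that begin and end with \emph{different} letters and keep the central block; the two constructions are dual and both yield the required linear lower bound on $\AS_t(n)$ for even $n$.
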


\begin{proof}
 Let $u$ be a factor of length $n>1$ of $t$ that begins and ends with the same letter. Since the image of any even-length word under $\mu$ is an abelian square, we have that $\mu^2(u)$ is an abelian square factor of $t$ of length $4n$ that begins and ends with the same letter. Moreover, the word obtained from $\mu^2(u)$ by removing the first and the last letter  is an abelian square factor of $t$ of length $4n-2$. So, by Corollary \ref{cor:TM}, $t$ contains at least $n-1$ abelian square factors of length $4n$ and at least $n-1$ abelian square factors of length $4n-2$. This implies that for every even $n$ the number of abelian square factors of $t$ of length $n$ is linear in $n$. Hence, for every $n$ the number of abelian square factors of $t$ of length at most $n$ is quadratic in $n$. The statement then follows from Lemmas \ref{lem:lin} and \ref{lem:un}.
 \qed
\end{proof}

%%%%%%%%%%%%%%%%%%%%%%%%%%%%%%%%%%%%%%%%%%%%%%%%%%%%%%%%%%%%%%%%%%%%%%%%%%%%%%%%%%%%%%%%%%%%%%
\subsection{Sturmian Words}
%%%%%%%%%%%%%%%%%%%%%%%%%%%%%%%%%%%%%%%%%%%%%%%%%%%%%%%%%%%%%%%%%%%%%%%%%%%%%%%%%%%%%%%%%%%%%%

In this section we fix the alphabet $\Sigma=\{\sa{a,b}\}$.

Recall that a (finite or infinite) word $w$ over $\Sigma$ is \emph{balanced} if and only if for any $u,v$ factors of $w$ of the same length, one has $||u|_{\sa{a}}-|v|_{\sa{a}}|\le 1$.  

We start with a simple lemma.

\begin{lemma}\label{lem:bal}
 Let $w$ be a finite balanced word over $\Sigma$. Then for any $k>0$,  $P(w)=(0,0)\mod k$ if and only if $w$ is an abelian $k$-power.
\end{lemma}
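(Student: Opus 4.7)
The forward direction is immediate from the definition: if $w=v_1v_2\cdots v_k$ with all $v_i$ having a common Parikh vector $(p,q)$, then $P(w)=(kp,kq)\equiv(0,0)\bmod k$, so I would dispatch this in one line.

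For the converse, assume $w$ is balanced with $P(w)=(kp,kq)$ for some non-negative integers $p,q$. In particular $|w|=k(p+q)$, so setting $m=p+q$, one can partition $w$ unambiguously into $k$ consecutive blocks $v_1,v_2,\ldots,v_k$ of length $m$. Each $v_i$ is a factor of $w$ of length $m$, so by the definition of balance, their numbers of $\sa{a}$'s differ pairwise by at most $1$. Hence there is an integer $p'$ such that $|v_i|_{\sa{a}}\in\{p',\,p'+1\}$ for every $i$.

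Let $s$ be the number of blocks with $p'+1$ occurrences of $\sa{a}$. Then
\[
|w|_{\sa{a}} \;=\; (k-s)p' + s(p'+1) \;=\; kp' + s.
\]
Since $|w|_{\sa{a}}=kp\equiv 0\bmod k$, it follows that $s\equiv 0\bmod k$, so $s=0$ or $s=k$. In either case all $k$ blocks have the same number of $\sa{a}$'s; since they also have the same length $m$, they share the same Parikh vector, and therefore $w$ is an abelian $k$-power.

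I do not expect a serious obstacle here; the statement is essentially a pigeonhole argument using balance, with the divisibility condition closing the gap between "numbers of $\sa{a}$'s differ by at most $1$" and "numbers of $\sa{a}$'s are equal". The only point worth stating carefully is that the partition into $k$ blocks of length $m$ is well-defined, which reduces to the observation that $k\mid |w|$, a consequence of the assumption $P(w)\equiv(0,0)\bmod k$.
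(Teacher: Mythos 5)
Your proof is correct and follows essentially the same route as the paper's: partition $w$ into $k$ consecutive blocks of equal length and use balance to force all blocks to share the same Parikh vector. The only difference is that you make explicit (via the counting argument $|w|_{\sa{a}}=kp'+s$ and $s\equiv 0 \bmod k$) the step that the paper dispatches with ``otherwise $w$ would not be balanced,'' which is a welcome clarification but not a different approach.
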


\begin{proof}
Let $w$ be balanced and $P(w)=(ks,kt)$, for a positive integer $k$ and some  $s,t\geq 0$. Then we can write $w=v_1v_2\cdots v_k$ where each $v_i$ has length $s+t$. Now, each $v_i$ must have Parikh vector equal to $(s,t)$ otherwise $w$ would not be balanced, whence the only if part of the statement follows. The if part is straightforward.
\qed
\end{proof}

A binary infinite word is \emph{Sturmian} if and only if it is balanced and aperiodic. Sturmian words are precisely the infinite words having $n+1$ distinct factors of length $n$ for every $n\geq 0$. There is a lot of other equivalent definitions of Sturmian words. A classical reference on Sturmian words is \cite[Chapter 2]{LothAlg}. Let us recall here the definition of Sturmian words as codings of a rotation.

We fix the torus $I=\mathbb{R}/\mathbb{Z}=[0,1)$. Given $\alpha,\beta$ in $I$, if $\alpha > \beta$, we use the notation $[\alpha,\beta)$ for the interval $[\alpha,1)\cup [0,\beta)$. Recall that given a real number $\alpha$,  $\lfloor \alpha \rfloor$ is the greatest integer smaller than or equal to $\alpha$, $\lceil \alpha \rceil$ is the smallest integer greater than or equal to $\alpha$, and $\{\alpha\}=\alpha-\lfloor \alpha \rfloor$ is the fractional part of  $\alpha$. Notice that $\{-\alpha\}= 1-\{\alpha\}$. 

Let $\alpha\in I$ be irrational, and $\rho\in I$. 
The Sturmian word $s_{\alpha,\rho}$ (resp.~$s'_{\alpha,\rho}$) of  \emph{angle} $\alpha$ and \emph{initial point} $\rho$ is the infinite word $s_{\alpha,\rho}=a_{0}a_{1}a_{2}\cdots$ defined by
$$a_{n} =
\left\{
	\begin{array}{ll}
		\sa{b}  & \mbox{if } \{ \rho + n\alpha \}\in I_{\sa{b}},\\
		\sa{a}  & \mbox{if } \{ \rho + n\alpha \}\in I_{\sa{a}}, 
	\end{array}
\right.$$ 
where $I_{\sa{b}}=[0,1-\alpha)$ and $I_{\sa{a}}=[1-\alpha,1)$   (resp.~$I_{\sa{b}}=(0,1-\alpha]$ and $I_{\sa{a}}=(1-\alpha,1]$).

In other words, take the unitary circle and consider a point initially in position $\rho$. Then start rotating this point on the circle (clockwise) of an angle $\alpha$, $2\alpha$, $3\alpha$, etc. For each rotation, take the letter $\sa{a}$ or $\sa{b}$ associated with the interval within which the point falls. The infinite sequence obtained in this way is the Sturmian word $s_{\alpha,\rho}$ (or $s'_{\alpha,\rho}$, depending on the choice of the two intervals). See \figurename~\ref{Fig:gab1} for an illustration.

\begin{figure}[!ht]
\centering
%\fbox{
\includegraphics[scale=0.3]{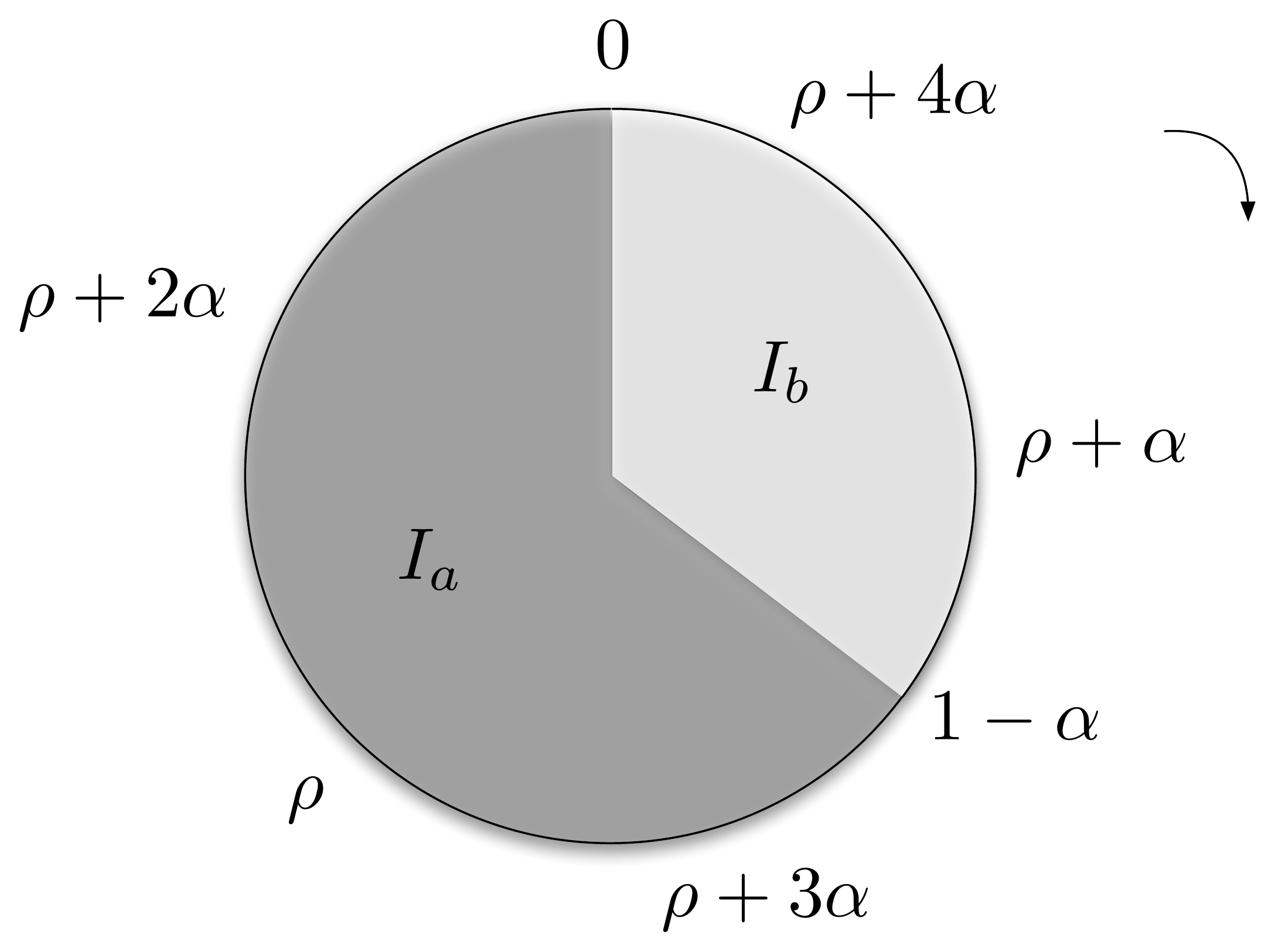} 
%}
\caption{The rotation of angle $\alpha=\phi-1\approx0.618$ and initial point $\rho=\alpha$ generating the Fibonacci word $F=s_{\phi-1,\phi-1}=\sa{abaababaabaabab}\cdots$. \label{Fig:gab1}}
\end{figure}

For example, if $\phi=(1+\sqrt 5 )/2\approx 1.618$ is the golden ratio,  the Sturmian word $$F=s_{\phi-1,\phi-1}=\sa{abaababaabaababaababaabaababaabaab}\cdots$$ is called the \emph{Fibonacci word}: %Since the (approximated) first values of the sequence $(\{ (\phi-1) + n(\phi-1) \})_{n\geq 0}$ are: $0.618, 0.236, 0.854, 0.472, 0.090, 0.708, 0.326, 0.944, 0.562, 0.180, 0.798, 0.416, \ldots$, and since $1-\alpha=\{-\alpha\}=2-\phi \approx0.382$, we have  

A Sturmian word for which $\rho=\alpha$, like the Fibonacci word, is called \emph{characteristic}. Note that for every $\alpha$ one has $s_{\alpha,0}=\sa{b}s_{\alpha,\alpha}$ and $s'_{\alpha,0}=\sa{a}s_{\alpha,\alpha}$.

An equivalent way to see the coding of a rotation consists in fixing the point and rotating the intervals.  In this representation, the interval $I_{\sa{b}}=I_{\sa{b}}^{0}$ is rotated at each step, so that after $i$ rotations it is transformed into the interval $I_{\sa{b}}^{-i}=[\{-i\alpha\},\{-(i+1)\alpha\})$, while $I_{\sa{a}}^{-i}=I\setminus I_{\sa{b}}^{-i}$. 

This representation is convenient since one can read within it not only a Sturmian word but also any of its factors. More precisely, for every positive integer $n$, the factor of length $n$ of $s_{\alpha,\rho}$ starting at position $j\geq 0$ is determined by the value of $\{\rho+j\alpha\}$ only. Indeed, for every $j$ and $i$, we have:
 $$a_{j+i} = \left\{ \begin{array}{lllll}
\sa{b} & \mbox{if $\{\rho + j\alpha\}\in I_{\sa{b}}^{-i}$;}\\
\sa{a} & \mbox{if $\{\rho + j\alpha\}\in I_{\sa{a}}^{-i}$.}
\end{array} \right.$$
As a consequence, we have that given a Sturmian word $s_{\alpha,\rho}$ and  a positive integer $n$, the $n+1$ different factors of $s_{\alpha,\rho}$ of length $n$ are completely determined by the intervals $I_{\sa{b}}^{0}, I_{\sa{b}}^{-1},\ldots, I_{\sa{b}}^{-(n-1)}$, that is, only by the points $\{-i\alpha\}$, $0\leq i< n$. In particular, they do not depend on $\rho$, so that the set of factors of $s_{\alpha,\rho}$ is the same as the set of factors of $s_{\alpha,\rho'}$ for any $\rho$ and $\rho'$.
Hence, from now on, we let $s_{\alpha}$ denote any Sturmian word of angle $\alpha$.

If we arrange the $n+2$ points $0,1,\{-\alpha\},\{-2\alpha\},\ldots,\{-n \alpha\}$ in increasing order, we determine a partition of $I$ in $n+1$ subintervals, $L_0(n),L_{1}(n),\ldots,L_{n}(n)$. Each of these subintervals is in bijection with a different factor of length $n$ of any Sturmian word of angle $\alpha$ (see \figurename~\ref{Fig:gab3}). 

\begin{figure}[!ht]
\centering
%\fbox{
\includegraphics[scale=0.5]{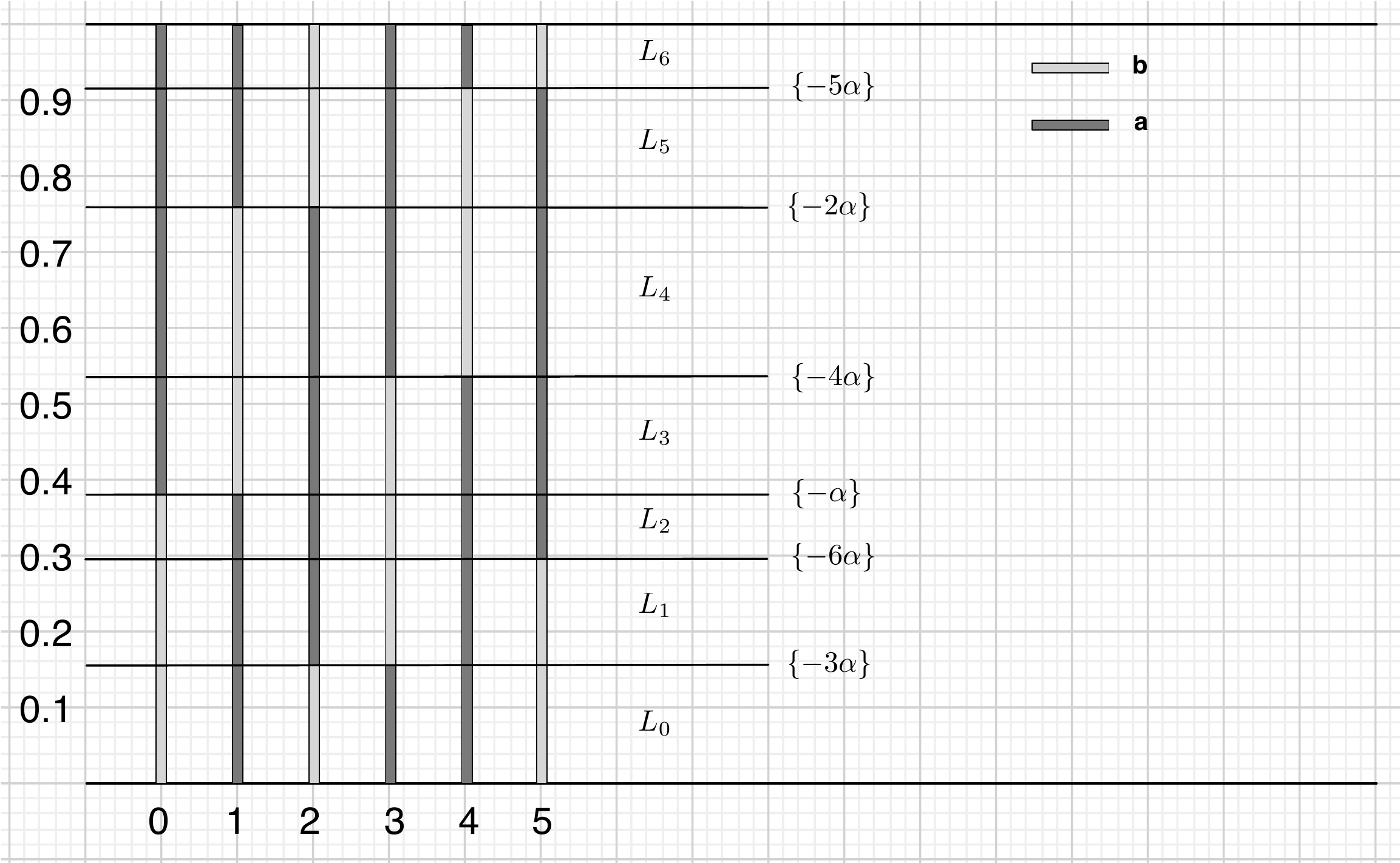} 
%}
\caption{% 
The points $0$, $1$ and  $\{-\alpha\}$, $\{-2\alpha\}$, $\{-3\alpha\}$, $\{-4\alpha\}$, $\{-5\alpha\}$, $\{-6\alpha\}$, 
arranged in increasing order, define the intervals $L_{0}(6)\approx[0,0.146)$, $L_{1}(6)\approx[0.146,0.292)$, $L_{2}(6)\approx[0.292,0.382)$, $L_{3}(6)\approx[0.382,0.528)$, $L_{4}(6)\approx[0.528,0.764)$, $L_{5}(6)\approx[0.764,0.910)$, $L_{6}(6)\approx[0.910,1)$. Each interval is associated with one of the factors of length $6$ of the Fibonacci word, respectively $\sa{babaab},\sa{baabab},\sa{baabaa},\sa{ababaa},\sa{abaaba},\sa{aababa},\sa{aabaab}$. \label{Fig:gab3}}
\end{figure}

Recall that a factor of length $n$ of a Sturmian word  $s_{\alpha}$ has a Parikh vector equal either to $(\lfloor n\alpha \rfloor , n-\lfloor n\alpha \rfloor )$ (in which case it is called \emph{light}) or to $(\lceil n\alpha \rceil , n-\lceil n\alpha \rceil) $ (in which case it is called \emph{heavy}).
The following proposition relates the intervals $L_{i}(n)$ to the Parikh vectors of the factors of length $n$ (see  \cite{dlt} and  \cite{Rigo13}).

\begin{proposition}\label{pro:main}
Let $s_{\alpha}$ be a Sturmian word of angle $\alpha$, and $n$ a positive integer.
Let $t_{i}$ be the factor of length $n$ associated with the interval $L_{i}(n)$. Then $t_{i}$ is
  heavy if  $L_{i}(n)\subset [\{-n\alpha\},1)$, while it is light if  $L_{i}(n)\subset [0,\{-n\alpha\})$.
\end{proposition}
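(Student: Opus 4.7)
The plan is to view $|w|_{\sa{b}}$ for a length-$n$ factor of $s_\alpha$ as a function $N(x)$ of its starting phase $x = \{\rho + j\alpha\}$, namely
\[
N(x) := \#\bigl\{\,i : 0 \le i \le n-1,\; x \in I_{\sa{b}}^{-i}\,\bigr\}.
\]
Since the only break points of $N$ are the endpoints of the intervals $I_{\sa{b}}^{-i}$, and these are precisely the points $\{-j\alpha\}$ for $0 \le j \le n$ that bound the partition $L_0(n),\ldots,L_n(n)$, the function $N$ is constant on each $L_i(n)$. It therefore suffices to decide, for each $L_i(n)$, whether $N$ takes its larger or its smaller possible value there.

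To do so, I would track how $N$ changes as $x$ crosses each $\{-j\alpha\}$, viewed on the torus. For $1 \le j \le n-1$, the point $\{-j\alpha\}$ is simultaneously the left endpoint of $I_{\sa{b}}^{-j}$ and the right endpoint of $I_{\sa{b}}^{-(j-1)}$, so the net change in $N$ is $+1 - 1 = 0$. The only unbalanced break points are $j=0$, where $x$ enters $I_{\sa{b}}^{0}$ without exiting any interval (so $N$ jumps up by $1$), and $j=n$, where $x$ exits $I_{\sa{b}}^{-(n-1)}$ without entering any interval (so $N$ drops by $1$). Since $\alpha$ is irrational, the $n+1$ endpoints $\{-j\alpha\}$ are pairwise distinct, so these two transitions cannot collide.

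Combining the transitions, $N$ takes exactly two consecutive integer values on $[0,1)$: the larger one on $[0,\{-n\alpha\})$ and the smaller on $[\{-n\alpha\},1)$. Translating into Parikh-vector language, a factor is heavy iff $|w|_{\sa{a}} = \lceil n\alpha \rceil$, equivalently iff $|w|_{\sa{b}} = N(x)$ is minimal, equivalently iff $x \in [\{-n\alpha\},1)$; symmetrically, light corresponds to $x \in [0,\{-n\alpha\})$, which is exactly the claim.

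The main obstacle is clean bookkeeping at the endpoints, especially when some $I_{\sa{b}}^{-i}$ wraps around $0$ (that is, when $\{-(i+1)\alpha\} < \{-i\alpha\}$). However, since the left-endpoint / right-endpoint role of each $\{-j\alpha\}$ is intrinsic to the interval it bounds, independent of whether that interval crosses $0$, the wrap-around does not affect the tally of net changes at any $\{-j\alpha\}$, and the argument goes through unchanged.
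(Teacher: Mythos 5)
Your argument is correct. The key observation---that $|w|_{\sa{b}}$ for the factor starting at phase $x$ equals $N(x)=\#\{i : 0\le i\le n-1,\ x\in I_{\sa{b}}^{-i}\}$, and that when one sweeps $x$ around the torus every break point $\{-j\alpha\}$ with $1\le j\le n-1$ is simultaneously a left endpoint of $I_{\sa{b}}^{-j}$ and a right endpoint of $I_{\sa{b}}^{-(j-1)}$, leaving only the unbalanced jumps $+1$ at $0$ and $-1$ at $\{-n\alpha\}$---is sound, and your remark that the left/right role of each endpoint is intrinsic to its interval correctly disposes of the wrap-around issue. The translation at the end is also right: heavy means $|w|_{\sa{a}}=\lceil n\alpha\rceil$, hence minimal $N$, hence $x\in[\{-n\alpha\},1)$. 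Note that the paper does not prove this proposition at all; it is stated with citations to the literature. So your contribution is a self-contained elementary proof where the paper offers none: it uses only the defining formula $a_{j+i}=\sa{b}\iff\{\rho+j\alpha\}\in I_{\sa{b}}^{-i}$ together with the fact (recalled in the paper) that a length-$n$ factor has one of exactly two Parikh vectors, and as a byproduct it re-derives that $N$ takes exactly two consecutive values, which is essentially the balance property at that length. The one point worth making explicit if you write this up is the half-open convention: since each $I_{\sa{b}}^{-i}$ and each $L_i(n)$ is closed on the left and open on the right, $N$ is genuinely constant on each $L_i(n)$ including its left endpoint, which is what lets you phrase the conclusion in terms of containment of $L_i(n)$ in one of the two arcs.
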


\begin{example}
 Let $\alpha=\phi-1\approx 0.618$ and $n=6$. We have $6\alpha\approx 3.708$, so that $\{-6\alpha\}\approx 0.292$. The reader can see in \figurename~\ref{Fig:gab3} that the factors of length $6$ 
corresponding to intervals above (resp.~below) $\{-6\alpha\}\approx 0.292$ all have Parikh vector $(4,2)$ (resp.~$(3,3)$). That is, the intervals $L_{0}$ and $L_{1}$ are associated with light factors (\sa{babaab}, \sa{baabab}), while the intervals $L_{2}$ to $L_{6}$ are associated with  heavy factors (\sa{baabaa}, \sa{ababaa}, \sa{abaaba}, \sa{aababa}, \sa{aabaab}). 
\end{example}

Observe that, by Lemma \ref{lem:bal}, every factor of a Sturmian word having even length and containing an even number of $\sa{a}$'s (or, equivalently, of $\sa{b}$'s) is an abelian square. The following proposition relates the abelian square factors of a Sturmian word of angle $\alpha$  with the arithmetic properties of $\alpha$.

\begin{proposition}\label{prop:hl}
Let $s_{\alpha}$ be a Sturmian word of angle $\alpha$, and $n$ a positive even integer.
Let $t_{i}$ be the factor of length $n$ associated with the interval $L_{i}(n)$. Then $t_{i}$ is
an abelian square if and only if $L_{i}(n)\subset [\{-n\alpha\},1)$ if $\lfloor n \alpha \rfloor$ is even, or $L_{i}(n)\subset [0,\{-n\alpha\})$ if $\lfloor n \alpha \rfloor$ is odd.
\end{proposition}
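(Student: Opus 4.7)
The plan is to combine Lemma~\ref{lem:bal} with Proposition~\ref{pro:main} via an elementary parity argument. Because every factor of a Sturmian word is balanced, Lemma~\ref{lem:bal} (applied with $k=2$) tells us that a factor $t_i$ of length $n$ is an abelian square if and only if its Parikh vector $P(t_i)$ lies in $(0,0) \bmod 2$, i.e.\ both coordinates are even.

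Next, I would invoke Proposition~\ref{pro:main} to read off $P(t_i)$ from the position of $L_i(n)$: we get $P(t_i) = (\lceil n\alpha \rceil, n-\lceil n\alpha \rceil)$ (heavy case) when $L_i(n)\subset[\{-n\alpha\},1)$, and $P(t_i) = (\lfloor n\alpha \rfloor, n-\lfloor n\alpha \rfloor)$ (light case) when $L_i(n)\subset[0,\{-n\alpha\})$. Since $n$ is even, the two components of each Parikh vector have the same parity, so requiring $P(t_i)\equiv(0,0)\bmod 2$ is equivalent to requiring that its first component is even.

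Because $\alpha$ is irrational, $n\alpha\notin\mathbb{Z}$ and hence $\lceil n\alpha\rceil = \lfloor n\alpha\rfloor + 1$, so exactly one of the two candidate values for the first component is even. The last step is then a case split according to the parity of $\lfloor n\alpha\rfloor$: when $\lfloor n\alpha\rfloor$ is even the light factors have even Parikh vector, and when $\lfloor n\alpha\rfloor$ is odd the heavy factors do. In each case one reads off the corresponding interval from Proposition~\ref{pro:main} to conclude.

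There is really no conceptual obstacle here; the proposition is a bookkeeping consequence of three facts already established (balancedness via Lemma~\ref{lem:bal}, the geometric description of Parikh vectors via Proposition~\ref{pro:main}, and irrationality of $\alpha$). The only point to be careful about is matching the two cases of the parity of $\lfloor n\alpha\rfloor$ with the correct interval $[0,\{-n\alpha\})$ versus $[\{-n\alpha\},1)$; one may wish to double-check against the Fibonacci example with $n=6$ and $\lfloor n\alpha\rfloor=3$, where the abelian squares are exactly the heavy factors.
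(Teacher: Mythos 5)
Your proof is correct and is essentially the paper's own argument: Lemma~\ref{lem:bal} with $k=2$, the heavy/light dichotomy of Proposition~\ref{pro:main}, and a parity case split on $\lfloor n\alpha\rfloor$ using $\lceil n\alpha\rceil=\lfloor n\alpha\rfloor+1$. Note that the conclusion you reach --- the light factors, i.e.\ those with $L_i(n)\subset[0,\{-n\alpha\})$, are the abelian squares when $\lfloor n\alpha\rfloor$ is even, and the heavy ones when it is odd --- is the correct one, consistent with the paper's own proof, with Corollary~\ref{cor:formula}, and with the Fibonacci example, whereas the proposition as printed transposes the two intervals; so your instinct to double-check the case matching against the $n=6$ example was well placed.
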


\begin{proof}
By Proposition \ref{pro:main}, $t_i$ is heavy if and only if $L_{i}(n)\subset [\{-n\alpha\},1)$, while it is light if and only if $L_{i}(n)\subset [0,\{-n\alpha\})$. If $\lfloor n \alpha \rfloor$ is even, then every light factor of length $n$ contains an even number of $\sa{a}$'s and hence is an abelian square, while if $\lfloor n \alpha \rfloor$ is odd, then every heavy factor of length $n$ contains an even number of $\sa{a}$'s and hence is an abelian square, whence the statement follows.
 \qed
\end{proof}

Recall that given a finite or infinite word $w$,  $\AS_w(n)$ denotes the number of abelian square factors of $w$ of length $n$. 

\begin{corollary}\label{cor:formula}
 Let $s_{\alpha}$ be a Sturmian word of angle $\alpha$.
For every positive even $n$, let $I_{n}=\{\{-i \alpha\}\mid 1\le i \le n\}$. Then
\[
\AS_{s_{\alpha}}(n)=
\begin{cases}
 \# \{x\in I_{n} \mid x \leq \{-n \alpha\}\} \mbox{ if $\lfloor n \alpha \rfloor$ is even; }\\
 \# \{x\in I_{n} \mid x \geq \{-n \alpha\}\} \mbox{ if $\lfloor n \alpha \rfloor$ is odd. }\\ 
\end{cases}
\]
\end{corollary}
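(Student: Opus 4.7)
The plan is to read Proposition \ref{prop:hl} as a statement about interval counts and translate it into the point counts appearing on the right-hand side of the corollary, by carefully tracking the endpoints that define the partition $L_0(n),\ldots,L_n(n)$.

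First I would set up notation. Label the $n+2$ points of $\{0,1\}\cup I_n$ in increasing order as $0=q_0<q_1<\cdots<q_{n+1}=1$; they are distinct because $\alpha$ is irrational. Then $L_i(n)=[q_i,q_{i+1})$ for $0\le i\le n$, $I_n=\{q_1,\ldots,q_n\}$, and there is a unique index $j$ with $1\le j\le n$ for which $\{-n\alpha\}=q_j$. The key point is that $\{-n\alpha\}$ is itself an endpoint of the partition, so every interval $L_i(n)$ lies either entirely in $[0,\{-n\alpha\})$ or entirely in $[\{-n\alpha\},1)$.

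Next I would do the two counts on each side. Since $L_i(n)=[q_i,q_{i+1})$, containment $L_i(n)\subset[0,q_j)$ is equivalent to $i+1\le j$, giving $j$ such intervals, and containment $L_i(n)\subset[q_j,1)$ is equivalent to $i\ge j$, giving $n+1-j$ such intervals. On the other side of the correspondence, since $0\notin I_n$ and $1>\{-n\alpha\}$, the elements of $I_n$ with $x\le\{-n\alpha\}$ are exactly $q_1,\ldots,q_j$, and the elements with $x\ge\{-n\alpha\}$ are exactly $q_j,\ldots,q_n$. Hence
\[
\#\{x\in I_n\mid x\le\{-n\alpha\}\}=j
\quad\text{and}\quad
\#\{x\in I_n\mid x\ge\{-n\alpha\}\}=n+1-j.
\]

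Finally I would invoke Proposition \ref{prop:hl} according to the parity of $\lfloor n\alpha\rfloor$: one parity identifies the abelian squares with the factors whose intervals lie in $[0,\{-n\alpha\})$, the other parity with those lying in $[\{-n\alpha\},1)$, and the two interval counts above read off directly as the two point counts in the statement. There is no substantive obstacle; the only subtlety is the bookkeeping of whether $\{-n\alpha\}$ itself falls on the $\le$ or $\ge$ side, and this is forced once and for all by the half-open convention $L_i(n)=[q_i,q_{i+1})$ together with the fact that $\{-n\alpha\}$ appears as a left endpoint of the partition.
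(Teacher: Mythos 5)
Your counting argument is exactly the content the paper leaves implicit in passing from Proposition~\ref{prop:hl} to Corollary~\ref{cor:formula}, and you carry it out correctly: since $\{-n\alpha\}$ is itself one of the partition points $q_j$, the number of intervals $L_i(n)$ contained in $[0,\{-n\alpha\})$ equals $j=\#\{x\in I_n\mid x\le\{-n\alpha\}\}$, and the number contained in $[\{-n\alpha\},1)$ equals $n+1-j=\#\{x\in I_n\mid x\ge\{-n\alpha\}\}$.

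The one place where your write-up has a real gap is the final step, where you say that ``one parity identifies the abelian squares with the factors whose intervals lie in $[0,\{-n\alpha\})$, the other parity with those lying in $[\{-n\alpha\},1)$'' without committing to which is which. This is not harmless bookkeeping here: the statement of Proposition~\ref{prop:hl} as printed assigns the case ``$\lfloor n\alpha\rfloor$ even'' to $L_i(n)\subset[\{-n\alpha\},1)$, and plugging that literally into your counts would produce Corollary~\ref{cor:formula} with its two cases interchanged. The assignment that makes the corollary come out as stated is the one in the \emph{proof} of Proposition~\ref{prop:hl} (and in the worked examples for $n=6$ and $n=8$): when $\lfloor n\alpha\rfloor$ is even the abelian squares are the \emph{light} factors, whose Parikh vector $(\lfloor n\alpha\rfloor, n-\lfloor n\alpha\rfloor)$ has both entries even, and by Proposition~\ref{pro:main} these are exactly the factors with $L_i(n)\subset[0,\{-n\alpha\})$, giving the count $\#\{x\in I_n\mid x\le\{-n\alpha\}\}$; when $\lfloor n\alpha\rfloor$ is odd the abelian squares are the heavy factors, with $L_i(n)\subset[\{-n\alpha\},1)$, giving $\#\{x\in I_n\mid x\ge\{-n\alpha\}\}$. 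You should make this derivation explicit, working from Proposition~\ref{pro:main} and the parity of the Parikh vectors rather than from the literal wording of Proposition~\ref{prop:hl}, whose two intervals appear to be swapped relative to its own proof.
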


\begin{example}
 The factors of length $6$ of the Fibonacci word $F$ are, lexicographically ordered:
$\sa{aabaab, aababa, abaaba, ababaa, baabaa}$ (heavy factors), $\sa{baabab, babaab}$ (light factors).
The light factors,  whose number of $\sa{a}$'s is $\lfloor 6  \alpha \rfloor  =3$,  are not abelian squares; the heavy factors,  whose number of $\sa{a}$'s is $\lceil 6 \alpha \rceil=4$,  are all abelian squares.

We have $I_{6}=\{0.382, 0.764, 0.146, 0.528, 0.910, 0.292\}$ (values are approximated) and $6\alpha\simeq 3.708$, so $\lfloor 6\alpha \rfloor$ is odd. Thus, there are $5$ elements in $I_{6}$ that are $\geq \{-6 \alpha\}$, so by Corollary \ref{cor:formula} there are $5$ abelian square factors of length $6$.

The factors of length $8$ of the Fibonacci word are, lexicographically ordered:
$\sa{aabaabab}$, $\sa{aababaab}$, $\sa{abaabaab}$, $\sa{abaababa}$, $\sa{ababaaba}$, 
$\sa{baabaaba}$, $\sa{baababaa}$, $\sa{babaabaa}$ (heavy factors), $\sa{babaabab}$ (light factor).
The light factor,  whose number of $\sa{a}$'s is $\lfloor 8  \alpha \rfloor  =4$,  is an abelian square; the heavy factors,  whose number of $\sa{a}$'s is $\lceil 8 \alpha \rceil=5$,  are not abelian squares.
We have $I_{8}=\{0.382, 0.764, 0.146, 0.528, 0.910, 0.292, 0.674, 0.056\}$ (values are approximated) and $8\alpha\simeq 4.944$, so $\lfloor 8\alpha \rfloor$ is even. Thus, there is only one element in $I_{8}$ that is $\leq \{8 \alpha\}$, so by Corollary \ref{cor:formula} there is only one abelian square factor of length $8$.

In Table \ref{tab:i} we report the first values of the sequence $\AS_F(n)$ for the Fibonacci word $F$.
\end{example}

\begin{table}[bt]
\centering  
\begin{small}
\begin{raggedright}
\begin{tabular}{c *{30}{@{\hspace{2.5mm}}c}}
$n$\hspace{2mm} &0  & 2  & 4  & 6  & 8 & 10  & 12  & 14  & 16  & 18  & 20 & 22 & 24 & 26 & 28 & 30 & 32 & 34 & 36
\\
\hline \\
$\AS_F(n)$\hspace{2mm} &0 & 1& 3& 5& 1& 9& 5& 5& 15& 3& 13& 13& 5& 25& 9& 15 & 25 & 21 & 27
\\
\hline \rule[0pt]{0pt}{12pt}
\end{tabular}
\end{raggedright}\caption{\label{tab:i} The first values of the sequence $\AS_{F}(n)$ of the number of abelian square factors of length $n$ in the Fibonacci word $F=s_{\phi-1,\phi-1}$.}
\end{small}
\end{table}

Recall that every irrational number $\alpha$ can be uniquely written as a (simple) continued fraction as follows:
\begin{equation}\label{cf}
 \alpha=a_{0}+\frac{1}{a_{1}+\frac{1}{a_{2}+\ldots}}
\end{equation}
where $a_{0}=\lfloor \alpha \rfloor$, and the infinite sequence $(a_{i})_{i\geq 0}$ is called the sequence of partial quotients of $\alpha$. The continued fraction expansion of $\alpha$ is usually denoted by its sequence of partial quotients as follows: $\alpha=[a_{0};a_{1},a_{2},\ldots ]$, and each its finite truncation $[a_{0};a_{1},a_{2},\ldots,a_{k}]$ is a rational number $n_{k}/m_{k}$ called the $k$th convergent to $\alpha$. We say that an irrational $\alpha=[a_{0};a_{1},a_{2},\ldots ]$ has bounded partial quotients if and only if the sequence $(a_{i})_{i\geq 0}$ is bounded.  

The development in  continued fraction of $\alpha$ is deeply related to the exponent of the factors of the Sturmian word $s_{\alpha}$. Recall that an infinite word $w$ is said to be $\beta$-power free, for some $\beta\geq 2$, if for every factor $v$ of $w$, the ratio between the length of $v$ and its minimal period is smaller than $\beta$.
The second author \cite{Mi89} proved that a Sturmian word of angle $\alpha$ is $\beta$-power free for some $\beta\geq 2$ if and only if $\alpha$ has bounded partial quotients.

Since the golden ratio $\phi$ is defined by the equation $\phi=1+1/\phi$, we have from Equation \ref{cf} that $\phi=[1;1,1,1,1,\ldots]$ and therefore $\phi-1=[0;1,1,1,1,\ldots]$, so the Fibonacci word is an example of $\beta$-power free Sturmian word (actually, it is $(2+\phi)$-power free \cite{MignosiPirillo}).

We are now proving that if $\alpha$ has bounded partial quotients, then the Sturmian word  $s_{\alpha}$ is abelian-square rich. For this, we will use a result on the discrepancy of uniformly distributed modulo $1$ sequences from \cite{KuNi}. To the best of our knowledge, this is the first application of this result to the theory of Sturmian words, and we think that the correspondence we are now showing might be useful for deriving other results on Sturmian words.

Let $\omega=(x_n)_{n\geq 0}$ be a given sequence of real numbers. For a positive integer $N$ and a subset $E$ of the torus $I$, we define $A(E;N;\omega)$ as the number of terms $x_n$, $0\leq n\leq N$, for which $\{x_n\}\in E$. If there is no risk of confusion, we will write $A(E;N)$ instead of $A(E;N;\omega)$.

\begin{definition}
 The sequence $\omega=(x_n)_{n\geq 0}$ of real numbers is said to be uniformly distributed modulo $1$ if and only if for every pair $a,b$ of real numbers with $0\leq a<b\leq 1$ we have
 \[
 \lim_{N\to \infty} \frac{A([a,b);N;\omega)}{N}=b-a.
 \]
 \end{definition}
 
 \begin{definition}
 Let $x_0,x_1,\ldots, x_N$ be a finite sequence of real numbers. The number
 \[
 D_N=D_N(x_0,x_1,\ldots, x_N)=\sup_{0\leq \gamma<\delta\leq 1}\left| \frac{A([\gamma,\delta);N)}{N}-(\delta-\gamma)\right|
 \]
 is called the discrepancy of the given sequence. For an infinite sequence $\omega$ of real numbers the discrepancy $D_{N}(\omega)$ is the discrepancy of the initial segment formed by the first $N+1$ terms of $\omega$.
\end{definition}

The two previous definitions are related by the following result.

\begin{theorem}[\cite{KuNi}]
 The sequence $\omega$ is uniformly distributed modulo $1$ if and only if $\lim_{N\to \infty}D_N(\omega)=0$.
\end{theorem}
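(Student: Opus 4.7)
The plan is to prove both directions of the equivalence separately, with the forward direction (discrepancy to zero implies uniform distribution) being essentially immediate and the reverse direction requiring a uniform approximation argument.

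For the easy direction, I will simply observe that for any fixed $0 \leq a < b \leq 1$, the definition of discrepancy gives
\[
\left| \frac{A([a,b);N;\omega)}{N} - (b-a) \right| \leq D_N(\omega).
\]
If $D_N(\omega) \to 0$, the right-hand side vanishes in the limit, so the ratio on the left converges to $b-a$, which is precisely the defining condition of uniform distribution modulo $1$.

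For the harder direction, suppose $\omega$ is uniformly distributed modulo $1$. The key idea is that, while uniform distribution only gives pointwise convergence on each interval, a compactness/sandwiching argument upgrades this to uniform convergence in the sense of discrepancy. Fix $\epsilon > 0$ and choose an integer $m$ large enough that $1/m < \epsilon/2$. Partition $[0,1)$ into the $m$ equal-length subintervals $J_k = [k/m, (k+1)/m)$ for $0 \leq k \leq m-1$. By uniform distribution applied to each $J_k$ (finitely many conditions), there exists $N_0$ such that for every $N \geq N_0$ and every $k$,
\[
\left| \frac{A(J_k; N; \omega)}{N} - \frac{1}{m} \right| < \frac{\epsilon}{2m}.
\]

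Now for any interval $[\gamma, \delta) \subseteq [0,1)$, I will sandwich it between unions of the $J_k$. Setting $k_1 = \lceil \gamma m \rceil$ and $k_2 = \lfloor \delta m \rfloor$, the inclusion $[k_1/m, k_2/m) \subseteq [\gamma, \delta) \subseteq [(k_1 - 1)/m, (k_2 + 1)/m)$ translates into corresponding inequalities on the counting functions, so $A([\gamma,\delta); N)$ lies between two sums of the $A(J_k; N)$. Combining these with the estimate above, both the lower and upper bounds for $A([\gamma,\delta);N)/N$ differ from $\delta - \gamma$ by at most $\epsilon/2 + 2/m < \epsilon$ (the first term from the finitely many $\epsilon/(2m)$ errors, the second accounting for the mismatch in length between $[\gamma,\delta)$ and its approximating intervals). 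Taking the supremum over $[\gamma, \delta)$ gives $D_N(\omega) < \epsilon$ for all $N \geq N_0$, completing the proof. The main subtlety is bookkeeping the error from both the approximation of the interval endpoints by multiples of $1/m$ and the accumulated errors of the $m$ small-interval estimates, but choosing $m$ large relative to $1/\epsilon$ controls both terms simultaneously.
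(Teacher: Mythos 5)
The paper does not prove this statement at all: it is quoted verbatim from Kuipers and Niederreiter \cite{KuNi} as a known result, so there is no internal proof to compare yours against. Your argument is the standard textbook proof of Weyl's criterion-adjacent equivalence (indeed essentially the one in \cite{KuNi} itself), and it is correct in substance: the forward direction is immediate from the definition, and the reverse direction correctly upgrades the pointwise convergence on each of finitely many grid intervals $J_k$ to a bound that is uniform over all test intervals $[\gamma,\delta)$ via the sandwich $[k_1/m,k_2/m)\subseteq[\gamma,\delta)\subseteq[(k_1-1)/m,(k_2+1)/m)$. One piece of bookkeeping is off: with your choice $1/m<\epsilon/2$ the total error is bounded by $\epsilon/2+2/m$, which is only guaranteed to be less than $3\epsilon/2$, not $\epsilon$; you need $1/m<\epsilon/4$ (or simply accept the bound $3\epsilon/2$, which is just as good since $\epsilon$ is arbitrary). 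You should also clamp the outer approximating interval to $[0,1)$ when $k_1=0$ or $k_2=m$, and note that the case $k_1>k_2$ (a short interval containing no grid point) is handled trivially since then $\delta-\gamma<2/m$. These are cosmetic repairs; the proof is sound.
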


An important class of  uniformly distributed modulo $1$ sequences is given by the sequence $(n\alpha)_{n\geq 0}$ with $\alpha$ a given irrational number and $n\in \nats$. The discrepancy of the sequence $(n\alpha)$ will depend on the finer arithmetical properties of $\alpha$. In particular, we have the following theorem, stating that if $\alpha$ has bounded partial quotients, then its discrepancy has the least order of magnitude possible.

\begin{theorem}[\cite{KuNi}]\label{theor:KN2}
 Suppose the irrational $\alpha=[a_0;a_1,\ldots]$ has  partial quotients bounded by $K$. Then the discrepancy $D_N(\omega)$ of $\omega=(n\alpha)$ satisfies $ND_N(\omega)=O(\log N)$. More exactly, we have
\begin{equation}
 ND_N(\omega)\leq 3+\left(\frac{1}{\log \phi}+\frac{K}{\log(K+1)}\right)\log N.
\end{equation}
\end{theorem}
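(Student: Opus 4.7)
This theorem is classical (Kuipers and Niederreiter, Chapter~2), and the plan is to sketch the route that yields the precise constants above. The main tool is the Erd\H{o}s--Tur\'an inequality, which for every integer $H \geq 1$ bounds
\[
D_N(\omega) \leq C_1\left(\frac{1}{H+1} + \frac{1}{N}\sum_{h=1}^{H}\frac{1}{h}\left|\sum_{n=0}^{N-1} e^{2\pi i h n\alpha}\right|\right).
\]
Summing the inner geometric series gives $\left|\sum_{n=0}^{N-1} e^{2\pi i h n\alpha}\right| \leq 1/(2\|h\alpha\|)$, where $\|\cdot\|$ denotes distance to the nearest integer, so the task reduces to an upper bound for $\sum_{h=1}^{H} 1/(h\|h\alpha\|)$.

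Next I would group this sum into blocks $[q_j, q_{j+1})$, where $(q_j)$ is the sequence of convergent denominators of $\alpha$. The theory of best approximations guarantees that $\|h\alpha\|$ is bounded below in terms of $\|q_j\alpha\| \asymp 1/q_{j+1}$ throughout such a block, with only a controlled number of almost-extremal $h$'s lying close to multiples of $q_j$. Careful bookkeeping (this is the main elementary calculation) yields the block-wise estimate
\[
\sum_{q_j \leq h < q_{j+1}} \frac{1}{h\,\|h\alpha\|} = O(a_{j+1}+1).
\]

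Finally, the bounded-partial-quotients hypothesis enters twice. The unconditional Fibonacci-type lower bound $q_j \geq \phi^{j-1}$ limits the total number of blocks with $q_j \leq N$ to at most $(\log N)/\log\phi + O(1)$, which produces the coefficient $1/\log\phi$. On the other hand, in the extremal regime $a_i \equiv K$ one has $q_j \leq (K+1)^{j}$, so there are at most $(\log N)/\log(K+1) + O(1)$ blocks, each contributing an additional factor of order $K$; this is the origin of the coefficient $K/\log(K+1)$. Choosing $H$ proportional to $N$ and adding the tail constant $3$ inherited from the Erd\H{o}s--Tur\'an step assembles the stated inequality. The main obstacle is precisely this two-fold accounting: one must separate the universal $\phi$-driven count of blocks from the $K$-driven weight inside each block, for otherwise the coefficient of $\log N$ would degrade to the coarser $K/\log\phi$.
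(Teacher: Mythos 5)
The paper offers no proof of this theorem --- it is quoted from Kuipers--Niederreiter --- so your sketch has to stand on its own, and it does not: the block estimate it hinges on is false. For a badly approximable $\alpha$ one has $\sum_{q_j\le h<q_{j+1}} 1/(h\|h\alpha\|)\asymp \log q_{j+1}$, not $O(a_{j+1}+1)$. Indeed, for $h$ ranging over the block the quantities $\|h\alpha\|$ sweep through values comparable to $k/q_{j+1}$ for a positive proportion of all $k\le q_{j+1}$, including the small ones (e.g.\ $h=q_j$ gives $\|h\alpha\|\asymp 1/q_{j+1}$), so the block sum of $1/\|h\alpha\|$ is $\asymp q_{j+1}\log q_{j+1}$ while $1/h\asymp 1/q_{j+1}$ throughout. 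Summing over the $O(\log N)$ blocks recovers the classical fact that $\sum_{h\le H}1/(h\|h\alpha\|)\asymp(\log H)^2$ for bounded partial quotients, and hence the Erd\H{o}s--Tur\'an route, with the exponential sums bounded in absolute value by $1/(2\|h\alpha\|)$, yields only $ND_N=O((\log N)^2)$ --- a genuinely weaker conclusion. Even setting this aside, the absolute multiplicative constants in the Erd\H{o}s--Tur\'an inequality would prevent you from landing on the exact coefficient $\frac{1}{\log\phi}+\frac{K}{\log(K+1)}$ and the additive constant $3$.

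The actual proof in Kuipers--Niederreiter avoids exponential sums entirely. One writes $N$ in its Ostrowski representation $N=\sum_i c_i q_i$ with $0\le c_i\le a_{i+1}$, splits the point set $(\{n\alpha\})_{n<N}$ into $\sum_i c_i$ segments, each an almost-arithmetic progression of step $q_i\alpha$ (within $1/q_{i+1}$ of an exact progression of step $p_i/q_i$), bounds the discrepancy contribution of each segment by an absolute constant, and obtains $ND_N\le\sum_i c_i+O(\#\{i\})$. The final accounting in your last paragraph is then essentially right, but it must be attached to this decomposition rather than to the block sums: the number of indices with $q_i\le N$ is at most $\log N/\log\phi+O(1)$ unconditionally, while $\sum_i c_i$ is maximized, subject to $q_m\le N$ and $a_i\le K$, when all partial quotients equal $K$, giving $K\log N/\log(K+1)+O(1)$. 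I would rebuild the argument on the Ostrowski/three-distance foundation; as written, the central estimate fails and the claimed bound does not follow.
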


We are now using previous definitions and results to prove that $\beta$-power free Sturmian words are abelian-square rich.

\begin{theorem}\label{theor:lin}
 Let $s_{\alpha}$ be a Sturmian word of angle $\alpha$ such that $\alpha$ has bounded partial quotients. Then there exists a positive constant $C$ such that for every $n$ sufficiently large one has $\sum_{m\leq n}\AS_{s_{\alpha}}(m)\geq Cn^2$.
\end{theorem}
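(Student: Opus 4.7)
The plan is to invoke Theorem \ref{theor:KN2} at two different scales. At the \emph{inner} scale, for a fixed even $m$, I would apply the low-discrepancy estimate to the first $m$ terms of $(\{-i\alpha\})$, whose discrepancy equals that of $(i\alpha)$ up to reflection of the torus. Combined with Corollary \ref{cor:formula}, this yields
\[
\AS_{s_\alpha}(m) \;\geq\; m\cdot g(m) \;-\; C_1 \log m,
\]
where $g(m) := \{-m\alpha\}$ if $\lfloor m\alpha\rfloor$ is even and $g(m) := 1-\{-m\alpha\}$ otherwise, and $C_1$ depends only on the bound on the partial quotients of $\alpha$.

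Next I would rewrite $g$ in a form suited to a second application of the discrepancy bound. Setting $r := m\alpha \bmod 2 \in [0,2)$, an elementary case check gives $g(m) = |1-r|$ regardless of parity: when $r<1$ the floor of $m\alpha$ is even and $\{-m\alpha\}=1-r$; when $r\geq 1$ the floor is odd and $1-\{-m\alpha\} = r-1$. For an even index $m=2k$ one has $r = 2\{k\alpha\}$, so $g(2k) = |1 - 2\{k\alpha\}|$, which is at least $1/2$ whenever $\{k\alpha\} \in [0,1/4]\cup[3/4,1]$.

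The \emph{outer} scale is a second invocation of Theorem \ref{theor:KN2}, now applied to the sequence $(k\alpha)_{k\geq 1}$. With $N := \lfloor n/2 \rfloor$, the set $G$ of indices $k \in (N/2,N]$ with $\{k\alpha\}\in[0,1/4]\cup[3/4,1]$ satisfies $|G| \geq N/4 - C_2\log N$, obtained by subtracting the two corresponding discrepancy-controlled counts up to $N$ and $N/2$. Since $\AS_{s_\alpha}$ vanishes at odd lengths and $k \in G$ yields $2k \geq N$ together with $g(2k)\geq 1/2$, summing over $G$ alone gives
\[
\sum_{m\leq n}\AS_{s_\alpha}(m) \;\geq\; |G|\cdot\bigl(N/2 - C_1\log n\bigr) \;\geq\; \bigl(N/4 - C_2\log N\bigr)\bigl(N/2 - C_1\log n\bigr) \;\geq\; C n^2
\]
for some $C>0$ and all sufficiently large $n$. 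The main delicacy is the interplay between the two uses of discrepancy: the parity of $\lfloor m\alpha\rfloor$ dictates which side of $\{-m\alpha\}$ we count, and the unification $g(m)=|1-r|$ is precisely what lets us select a single outer condition on $\{k\alpha\}$ that guarantees a uniformly positive contribution across even $m$, with the two logarithmic error terms swamped by the quadratic main term.
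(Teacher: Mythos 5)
Your proof is correct, and at the top level it follows the same strategy as the paper: two applications of Theorem \ref{theor:KN2}, one at an inner scale to control how many of the points $\{-i\alpha\}$ fall on the correct side of $\{-m\alpha\}$, and one at an outer scale to count the good even lengths in $(n/2,n]$, yielding a product of two factors each of size $\Theta(n)$ with $O(\log n)$ errors. The decompositions differ in instructive ways, though. The paper encodes the parity of $\lfloor m\alpha\rfloor$ by passing to the halved sequence $(i\alpha/2)$, keeps only lengths with $\{m\alpha/2\}\leq 1/4$ (even floor and $\{m\alpha\}\leq 1/2$), and for those lower-bounds $\AS_{s_\alpha}(m)$ by the single, $m$-independent count of $i\leq n/2$ with $\{i\alpha\}\geq 1/2$. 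You instead prove a per-length estimate $\AS_{s_\alpha}(m)\geq m\,g(m)-O(\log m)$ and unify the two parity cases via the identity $g(m)=|1-(m\alpha \bmod 2)|$, so that for $m=2k$ the good set becomes $\{k\alpha\}\in[0,1/4]\cup[3/4,1]$: you exploit both parity classes where the paper uses only one, and your outer count runs over $(k\alpha)$ rather than $(m\alpha/2)$ (the same condition in disguise, since $\{2k\cdot\alpha/2\}=\{k\alpha\}$). A small advantage of your bookkeeping is that, by parametrizing the good lengths as $m=2k$, you only ever invoke Corollary \ref{cor:formula} at even lengths, whereas the paper's displayed chain formally sums over all $m\leq n$. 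Two routine points you should still make explicit: the discrepancy of $(\{-i\alpha\})$ is that of $(i(1-\alpha))$, and $1-\alpha$ has bounded partial quotients exactly when $\alpha$ does; and $[0,1/4]\cup[3/4,1]$ is a union of two intervals, so Theorem \ref{theor:KN2} must be applied to each piece separately, which only doubles the constant.
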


\begin{proof}

For every even $n$, let $I'_{n}=\{\{i \alpha\}\mid 1\le i \le n\}$. By Corollary \ref{cor:formula} and basic arithmetical properties of the fractional part, we have:
\[
\AS_{s_{\alpha}}(n)=
\begin{cases}
 \# \{x\in I'_{n} \mid x \geq \{n \alpha\}\} \mbox{ if $\lfloor n \alpha \rfloor$ is even; }\\
 \# \{x\in I'_{n} \mid x \leq \{n \alpha\}\} \mbox{ if $\lfloor n \alpha \rfloor$ is odd. }\\ 
\end{cases}
\]
So:
\begin{eqnarray}
&& \sum_{m\leq n}\AS_{s_{\alpha}}(m)  \\
&& \geq \sum_{m\leq n} \# \{ \{i \alpha\}\mid\{i \alpha\}\leq 1/2, i\leq m, \mbox{ and  }
\{m \alpha\}\leq 1/2,  \lfloor m\alpha\rfloor \mbox{  even }\} \\
&& \geq  \sum_{m\leq n} \# \{ \{i \alpha/2\}\mid\{i \alpha/2\}\in[1/4,1/2), i\leq m, \mbox{ and }
\{m \alpha/2\}\leq 1/4 \}   \\
 && \geq \hspace{-3mm} \sum_{n/2\leq m\leq n} \# \{ \{i \alpha/2\}\mid\{i \alpha/2\}\in[1/4,1/2), i\leq n/2, \mbox{ and }
\{m \alpha/2\}\leq 1/4 \} \\
&& = \# \{ \{i \alpha/2\}\mid\{i \alpha/2\}\in[1/4,1/2), i\leq n/2\}\times \hspace{-4mm} \sum_{n/2\leq m\leq n}  \{ m \mid \{m \alpha/2\}\leq 1/4 \}
\end{eqnarray}
where: $(4)$ follows from $(3)$ by Corollary \ref{cor:formula};  $(5)$ follows from $(4)$ because $\{m \alpha/2\}\leq 1/4$ implies $\{m \alpha\}\leq 1/2$ and $\lfloor m\alpha\rfloor$ is even if and only if $\{m \alpha/2\}\leq 1/2$;  $(6)$ follows from $(5)$ is obvious; finally $(7)$ follows from $(6)$ because the cardinality of the first set is independent from the sum.

Now, $\alpha/2$ has bounded partial quotients (since $\alpha$ has) and we can apply Theorem \ref{theor:KN2} to evaluate the two factors of $(7)$. So we have:
\begin{eqnarray*}
&& \# \{ \{i \alpha/2\}\mid\{i \alpha/2\}\in[1/4,1/2), i\leq n/2\} \\ 
&& = A([1/4,1/2);n/2;(n\alpha/2)) \\\
&& \geq (1/2-1/4)n/2-C_1\log n \\
&& = n/8-C_1\log n,
\end{eqnarray*}
for $n$ sufficiently large and a positive constant $C_1$. We also have:
\begin{eqnarray*}
&& \hspace{-3mm} \sum_{n/2\leq m\leq n} \{ m \mid \{m \alpha/2\}\leq 1/4 \} \\
&& =A([0,1/4);n;(n\alpha/2))-A([0,1/4);n/2;(n\alpha/2)) \\
&&  \geq n/4-C_2\log n - n/8 - C_3\log n \\
&& = n/8 - C_4 \log n,
\end{eqnarray*}
for $n$ sufficiently large and positive constants $C_2,C_3,C_4$. The product of the two factors of $(6)$ is therefore greater than a constant times $n^2$, as required.
 \qed
\end{proof}

The recurrence quotient $r_{\alpha}$ of a Sturmian word of angle $\alpha=[0;a_1,a_2,\ldots]$ such that $\alpha$ has bounded partial quotients verifies $2+r_{\alpha}<\limsup a_i <3+r_{\alpha}$ \cite[Prop.~5]{Ca99}. Moreover, Durand \cite{Du03} proved that a Sturmian word of angle $\alpha$ is linearly recurrent if and only if $\alpha$ has bounded partial quotients. Thus, we have the following:

\begin{corollary}
 Let $s_{\alpha}$ be a Sturmian word of angle $\alpha$. If $s_{\alpha}$ is $\beta$-power free, then $s_{\alpha}$ is uniformly abelian-square rich.
\end{corollary}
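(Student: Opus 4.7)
The plan is to chain together the ingredients assembled just before the corollary. First I would invoke Mignosi's characterization, cited in the excerpt, to rephrase the hypothesis: $s_{\alpha}$ is $\beta$-power free for some $\beta \geq 2$ if and only if the partial quotients of $\alpha$ are bounded. So the assumption of the corollary is exactly the hypothesis of Theorem \ref{theor:lin}.

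Applying Theorem \ref{theor:lin} then gives a positive constant $C$ with $\sum_{m\leq n}\AS_{s_{\alpha}}(m)\geq Cn^{2}$ for all sufficiently large $n$. At this point I would bring in Durand's theorem, again already cited, to conclude that $s_{\alpha}$ is linearly recurrent (since $\alpha$ has bounded partial quotients). With these two facts in hand, Lemma \ref{lem:lin} applies and yields that $s_{\alpha}$ is abelian-square rich.

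To upgrade from abelian-square rich to \emph{uniformly} abelian-square rich, I would finally invoke Lemma \ref{lem:un}, whose hypotheses (abelian-square rich plus linearly recurrent) are precisely what has just been established. The chain \textbf{Mignosi} $\Rightarrow$ \textbf{Theorem \ref{theor:lin}} $\Rightarrow$ \textbf{Durand + Lemma \ref{lem:lin}} $\Rightarrow$ \textbf{Lemma \ref{lem:un}} closes the argument.

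There is essentially no obstacle here: all the work has been done in the preceding results, and the corollary is a one-line composition of them. The only thing to be careful about is to state explicitly, at the start of the proof, that bounded partial quotients is equivalent to being $\beta$-power free for some $\beta$, so that both Theorem \ref{theor:lin} and Durand's linear-recurrence criterion can be applied without further comment.
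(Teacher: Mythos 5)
Your proposal is correct and follows exactly the same chain as the paper's own proof: Mignosi's equivalence with bounded partial quotients, Theorem \ref{theor:lin}, Durand's linear-recurrence criterion, then Lemmas \ref{lem:lin} and \ref{lem:un}. Nothing further is needed.
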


\begin{proof}
We known that $s_{\alpha}$ is $\beta$-power free for some  $\beta\geq 2$ if and only if $\alpha$ has bounded partial quotients if and only if $s_{\alpha}$ is linearly recurrent. The statement then follows from Theorem \ref{theor:lin} and Lemmas \ref{lem:lin} and \ref{lem:un}.
\qed
\end{proof}
 
%%%%%%%%%%%%%%%%%%%%%%%%%%%%%%%%%%%%%%%%%%%%%%%%%%%%%%%%%%%%%%%%%%%%%%%%%%%%%%%%%%%%%%%%%%%%%
\section{Conclusions and future work}
%%%%%%%%%%%%%%%%%%%%%%%%%%%%%%%%%%%%%%%%%%%%%%%%%%%%%%%%%%%%%%%%%%%%%%%%%%%%%%%%%%%%%%%%%%%%%

We proved that the Thue-Morse is uniformly abelian-square rich. We think that the technique we used for the proof can be generalized to some extent, and could be used, for example,  to prove that a class of fixed points of uniform substitutions are uniformly abelian-square rich.

We also proved that Sturmian words that are $\beta$-power free for some $\beta\geq 2$ are uniformly abelian-square rich. The proof we gave is based on a classical result on the discrepancy of the uniformly distributed modulo $1$ sequence $(n\alpha)_{n\geq 0}$, where $\alpha$ is the slope of the Sturmian word. To the best of our knowledge, this is the first application of this result to the theory of Sturmian words, and we think that the correspondence we have shown might be useful for deriving other results on Sturmian words.

The natural question that then arises is whether the hypothesis of power freeness is necessary for a Sturmian word being (uniformly) abelian-square rich. 
We leave open the question to determine whether $s_{\alpha}$ is not uniformly abelian-square rich nor abelian-square rich in the case when $\alpha$ has unbounded partial quotients. 

We mostly investigated binary words in this paper. We conjecture that binary words have the largest number of abelian square factors. More precisely, we propose the following conjecture.

\begin{conjecture}
 If a word of length $n$ contains $k$ many distinct abelian square factors, then there exists a binary word of length $n$ containing at least $k$ many distinct abelian square factors.
\end{conjecture}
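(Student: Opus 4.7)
The plan is an inductive alphabet-reduction argument. Suppose $w$ has length $n$ and lives over an alphabet $\Sigma$ with $\sigma = |\Sigma| \ge 3$; the goal is to build a word $w'$ of the same length over an alphabet of size $\sigma - 1$ whose number of distinct abelian square factors is at least that of $w$. Iterating the reduction down to $\sigma = 2$ produces the required binary witness, so the whole conjecture rests on proving the single-step reduction.

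For the step itself, fix two letters $c, d \in \Sigma$ and consider the morphism $\phi : \Sigma^* \to (\Sigma \setminus \{d\})^*$ that sends $d$ to $c$ and fixes every other letter. If $uv$ is an abelian square factor of $w$, then $\phi(u)\phi(v)$ is an abelian square factor of $\phi(w)$, since $P(u) = P(v)$ implies $|u|_c + |u|_d = |v|_c + |v|_d$ while all other coordinates already agree. Thus $\phi$ cannot decrease the multiset of abelian square \emph{occurrences}. The only obstruction is that two distinct abelian square factors $s, t$ of $w$ may collapse to a common string under $\phi$, and this forces $s$ and $t$ to agree outside a nonempty set of positions on which $\{s_i, t_i\} = \{c, d\}$.

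To control this collapse, I would replace the rigid $\phi$ by a position-dependent family of merges: for each position where $w$ carries a letter in $\{c, d\}$, independently pick a label in $\{c, d\}$ to assign before collapsing. This yields a large family of candidate $(\sigma-1)$-letter words of the same length, each of which still inherits every abelian square of $w$; the task becomes to show that some member of the family also preserves distinctness. For any fixed colliding pair $(s, t)$, a positive fraction of labelings separates them by disagreeing with $\phi$ on at least one position of their disagreement set, which suggests a probabilistic or extremal argument.

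The hard part is precisely this simultaneous collapse control: the events "the pair $(s, t)$ collapses" across different pairs of abelian square factors are heavily correlated because they are all driven by one common relabeling of $c$ and $d$ in a single word, so a union bound over pairs is far too weak. An honest proof would likely need to govern the entire incidence structure between abelian square factors of $w$ at once, or to abandon the reduction paradigm altogether and attempt a direct construction, for instance mapping $w$ to a binary word of the form $\sa{a}^{m_1}\sa{b}\sa{a}^{m_2}\sa{b}\cdots\sa{b}\sa{a}^{m_k}$ in the spirit of the opening Proposition, with block lengths $m_i$ tuned so that the resulting abelian square factors of the binary word witness at least as many distinct factors as $w$ does.
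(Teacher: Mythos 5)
This statement is one of the paper's open \emph{conjectures}: the authors offer no proof of it, and any complete argument here would be a new result rather than a reconstruction of something in the text. Your proposal, by your own admission, is not such an argument. The single-step alphabet reduction --- the entire content of the inductive plan --- is left unproven. The morphism $\phi$ identifying $d$ with $c$ does send every abelian square factor of $w$ to an abelian square factor of $\phi(w)$, but it can strictly decrease the number of \emph{distinct} such factors: for instance $cdcd$ and $cddc$ are two distinct abelian squares that both collapse to $cccc$. You correctly identify this collision problem, but the repair you sketch does not withstand scrutiny. If the ``labels'' in $\{c,d\}$ are assigned position by position and then both letters are collapsed anyway, every member of your family is the same word $\phi(w)$ and nothing has been gained. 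If instead you mean a genuinely position-dependent relabeling into the smaller alphabet (not a morphism), then the claim that ``each member of the family still inherits every abelian square of $w$'' is false: an abelian square $uv$ with $P(u)=P(v)$ can have its two halves relabeled inconsistently, destroying the Parikh-vector equality. So the family either carries no new information or loses the one property the reduction needs.

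Beyond that definitional issue, the correlation obstacle you name in your final paragraph is the actual mathematical difficulty, and you leave it entirely open; a union bound over colliding pairs is, as you say, hopeless, and no structural control of the incidence pattern of abelian squares is provided. The alternative direct construction ($\sa{a}^{m_1}\sa{b}\cdots\sa{b}\sa{a}^{m_k}$ with tuned block lengths) is only gestured at, with no mechanism for transferring the count from $w$ to the binary word. In short: the proposal is an honest research plan, not a proof, and the statement remains a conjecture both in the paper and after your attempt.
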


A slightly different point of view from the one we considered in this paper consists in identifying two abelian squares if they have the same Parikh vector. Two abelian squares are therefore called \emph{inequivalent} if they have different Parikh vectors \cite{FSP97}. Sturmian words only have a linear number of inequivalent abelian squares. Nevertheless, a word of length $n$ can contain $\Theta(n\sqrt{n})$ inequivalent abelian squares \cite{warsaw}. Computations support the following conjecture:

\begin{conjecture}[see \cite{Ry14}]
 A word of length $n$ contains $O(n\sqrt{n})$ inequivalent abelian squares.
\end{conjecture}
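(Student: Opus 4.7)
The plan is to prove the conjectured $O(n\sqrt{n})$ upper bound by counting Parikh vectors of abelian square factors separately in each length class and then summing. Let $\alpha_w(2k)$ denote the number of distinct Parikh vectors realised by abelian square factors of length $2k$ in a word $w$ of length $n$; the quantity to bound is $\sum_{k=1}^{\lfloor n/2 \rfloor} \alpha_w(2k)$.

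First I would exploit the structural identity that if $(2p,2q)$ is the Parikh vector of an abelian square of length $2k$, then $(p,q)$ is the Parikh vector of some factor of length $k$ (indeed, of each half), so $\alpha_w(2k) \le \rho_w(k)$, where $\rho_w$ is the abelian complexity. Combined with the trivial bounds $\rho_w(k)\le k+1$ and $\rho_w(k)\le n-k+1$, this already gives $\alpha_w(2k)\le \min(k+1,\,n-k+1)$, but summing this yields only $O(n^2)$, which is insufficient.

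Next I would split the sum at the threshold $k_0 = \lceil \sqrt{n}\,\rceil$. For $k\le k_0$ the short-range contribution is at most $\sum_{k\le \sqrt{n}}(k+1)=O(n)$, which is well absorbed by the target. The remaining range $k_0 < k \le n/2$ is where the argument must do real work: one needs a uniform bound of the shape $\alpha_w(2k)=O(\sqrt{n})$, and this is the crux of the proof.

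For that uniform bound I would parametrise positions by the prefix-sum function $f(i)=|w[1\pp i]|_\sa{a}$, and observe that an abelian square of length $2k$ starting at position $i$ is exactly a length-$2k$ factor for which $2f(i+k)=f(i)+f(i+2k)$, while its Parikh vector is encoded by the pair $(k,\,f(i+2k)-f(i))$. The main obstacle is then the extremal question: how many distinct such pairs can arise from arithmetic progressions $(i,i+k,i+2k)\subset\{0,\dots,n\}$ on which $f$ attains its midpoint? A simple pigeonhole seems inadequate, since the lower-bound constructions of \cite{warsaw} show the $\Omega(n\sqrt{n})$ bound is tight; the needed $\sqrt{n}$ saving appears to require genuinely additive-combinatorial input (in the spirit of bounds on the number of 3-term arithmetic progressions compatible with a given sequence of partial sums). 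I expect this final extremal estimate to be the true difficulty, reflecting exactly why the statement is presently only a conjecture.
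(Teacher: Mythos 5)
The statement you were asked to prove is not proved in the paper at all: it appears there as a conjecture (attributed to \cite{Ry14}), supported only by computations, and the authors offer no argument for it. Your proposal, to its credit, recognises this, but as written it is not a proof either. The reductions you do carry out are sound but elementary: $\alpha_w(2k)\le\rho_w(k)$ holds because each half of an abelian square of length $2k$ is a factor of length $k$ carrying half the Parikh vector, and the split at $k_0=\lceil\sqrt{n}\,\rceil$ correctly disposes of the short lengths, contributing $O(n)$. Everything then hinges on the uniform estimate $\alpha_w(2k)=O(\sqrt{n})$ for $k>k_0$, and this is precisely the open content of the conjecture. You give no argument for it beyond reformulating it as an extremal question about $3$-term arithmetic progressions $(i,i+k,i+2k)$ on which the prefix-sum function $f$ attains its midpoint, and correctly observing that pigeonhole cannot suffice because the $\Omega(n\sqrt{n})$ construction of \cite{warsaw} shows the bound would be tight. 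So the gap is not a technical oversight; it is the entire difficulty.

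Two further cautions about the scaffolding itself. First, the bound $\rho_w(k)\le k+1$ is a binary-alphabet fact (the Parikh vector of a length-$k$ binary factor is determined by its number of $\sa{a}$'s); over larger alphabets one only has $\rho_w(k)\le n-k+1$, and reducing the general conjecture to the binary case is not available --- the paper's other conjecture about binary words being extremal concerns distinct abelian squares, not inequivalent ones, and is itself open. Second, a uniform per-length bound $\alpha_w(2k)=O(\sqrt{n})$ is sufficient but not necessary for the conjectured total: a priori one only knows $\alpha_w(2k)\le\min(k+1,\,n-2k+1)$ for binary words, and the sum could still be $O(n\sqrt{n})$ even if some individual length classes were much larger than $\sqrt{n}$. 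By committing to a per-class bound you may be attacking a strictly stronger statement than the conjecture requires. In short, your outline is a reasonable way to organise the problem, but it does not establish the statement, and no comparison with the paper's proof is possible because the paper has none.
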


%%%%%%%%%%%%%%%%%%%%%%%%%%%%%%%%%%%%%%%%%%%%%%%%%%%%%%%%%%%%%%%%%%%%%%%%%%%%%%%%%%%%%%%%%%%%%
\section{Acknowledgements}
%%%%%%%%%%%%%%%%%%%%%%%%%%%%%%%%%%%%%%%%%%%%%%%%%%%%%%%%%%%%%%%%%%%%%%%%%%%%%%%%%%%%%%%%%%%%%

The authors acknowledge the support of the PRIN 2010/2011 project ``Automi e Linguaggi Formali: Aspetti Matematici e Applicativi'' of the Italian Ministry of Education (MIUR).

\bibliographystyle{abbrv}
\bibliography{ref}

\end{document}